\newtheorem{theorem}{Theorem}
\newtheorem{lemma}[theorem]{Lemma}
\journal{\rm This is a preprint whose final form is published 
in 'Chaos, Solitons \& Fractals' 
at [\url{https://doi.org/10.1016/j.chaos.2024.115633}].}
\begin{document}

\begin{frontmatter}

\title{Parameters estimation and uncertainty assessment in the transmission dynamics of rabies in humans and dogs}

\author[mysecondaryaddress,mymainaddress]{Mfano Charles}\corref{mycorrespondingauthor}
\cortext[mycorrespondingauthor]{Corresponding author}
\ead{mfanoc@nm-aist.ac.tz}

\author[comymainaddress]{Sayoki G. Mfinanga}
\ead{gsmfinanga@yahoo.com}

\author[mysecondaryaddress]{G.A. Lyakurwa}
\ead{geminpeter.lyakurwa@nm-aist.ac.tz}

\author[pt]{Delfim F. M. Torres}
\ead{delfim@ua.pt}

\author[mysecondaryaddress]{Verdiana G. Masanja}
\ead{verdiana.masanja@nm-aist.ac.tz}

\address[mysecondaryaddress]{School of Computational and Communication Science and Engineering,\\ 
Nelson Mandela African Institution of Science and Technology (NM-AIST), P.O. Box 447, Arusha, Tanzania}

\address[mymainaddress]{Department of ICT and Mathematics, 
College of Business Education (CBE), P.O. BOX 1968 Dar es Salaam, Tanzania}

\address[comymainaddress]{NIMR Chief Research Scientist Fellow}

\address[pt]{Center for Research and Development in Mathematics and Applications (CIDMA),\\
Department of Mathematics, University of Aveiro, 3810-193 Aveiro, Portugal}


\begin{abstract}
Rabies remains a pressing global public health issue, demanding effective modeling and control strategies. 
This study focused on developing a mathematical model using ordinary differential equations (ODEs) 
to estimate parameters and assess uncertainties related to the transmission dynamics of rabies in humans and dogs. 
To determine model parameters and address uncertainties, next-generation matrices were utilized to calculate 
the basic reproduction number ${\cal R}_{0}$. Furthermore, the Partial Rank Correlation Coefficient was used 
to identify parameters that significantly influence model outputs. The analysis of equilibrium states revealed 
that the rabies-free equilibrium is globally asymptotically stable when $R_0<1$, whereas the endemic equilibrium 
is globally asymptotically stable when ${\cal R}_{0}\geq 1$. To reduce the severity of rabies and align with 
the Global Rabies Control (GRC) initiative by 2030, the study recommends implementing 
control strategies targeting indoor domestic dogs.
\end{abstract}

\begin{keyword}
Rabies dynamics \sep  Mathematical Modeling\sep Latin Hypercube Sampling (LHS) \sep Contact rate \sep Parameter uncertainty.
\end{keyword}
\end{frontmatter}


\section{Introduction}

Rabies is a fatal and neglected viral disease that affects mammals, including humans. 
The disease is primarily spread through bites and scratches by infected animals 
\cite{bonilla2022mapping,barecha2017epidemiology,kabeto2021rabies,kavoosian2023comparison,lembo2010feasibility}. 
The disease is caused by  rabies virus, a member of the family Rhabdoviridae and the genus Lyssavirus. 
The disease typically presents with symptoms such as inflammation, fever, headache, and malaise. 
These initial symptoms can progress to more specific manifestations including anxiety, confusion, hallucinations, 
dysphagia, muscle spasms, and paralysis, therefore, immediate medical attention 
is essential in the event of a bite or scratch from animals such as dogs and cats 
\cite{rocha2017epidemiological,taylor2015global,marsden2006rabies,monath2013vaccines}. 
Prevention of rabies primarily involves vaccinating humans and domestic dogs, 
and avoiding contact with wild animals that are  carriers of rabies virus 
\cite{barecha2017epidemiology,maki2017oral}. In many countries, vaccination programs 
have been successful in reducing the incidence of rabies in humans and domestic dogs 
\cite{velasco2017successful,gossner2020prevention}. Post-exposure prophylaxis (PEP) 
is a vaccine recommended for individuals who have been bitten or scratched by an animal 
who are  carrier of rabies virus. PEP comprises a series of injections with both rabies 
vaccine and rabies immune globulin to prevent the virus from causing an infection 
\cite{kabeto2021rabies}. 

Mathematical models are fundamental tools for comprehending 
and addressing the transmission dynamics of rabies. They aid in describing and predicting 
epidemiological phenomena, allocating resources, and structuring control strategies. 
Researchers in different countries have developed several mathematical models to 
study the transmission dynamics of rabies in dog populations 
\cite{abdulmajid2021analysis,amoako2021rabies,abrahamian2022rhabdovirus,ruan2017modeling,ayoade2019saturated,chapwanya2022environment}, 
as well as interactions between dog and human populations \cite{kadowaki2018risk, ayoade2023modeling, tulu2017impact}, 
and even among dog, human, and other animal populations \cite{pantha2021modeling,eze2020mathematical,ega2015sensitivity}. 
These studies have identified several factors that can shape the dynamics of rabies in different countries 
and ways of controlling the disease. However, due to the complex nature of disease transmission dynamics 
and the inherent variability in real-world data, parameter estimation poses significant challenges.
Estimating these parameters accurately is essential for designing effective control measures and interventions 
to prevent the spread of rabies. One common approach to parameter estimation involves using the least 
square method and validating the model by fitting real-world data to the proposed model. It is crucial 
to assess uncertainty in parameter estimates to understand the reliability of model predictions 
and make informed decisions about disease control strategies. Uncertainty can arise from various sources, 
including variability in data, model structure, and parameter estimation techniques.

To address this issue, statistical techniques such as Latin Hypercube Sampling (LHS) 
and Partial Rank Correlation Coefficient (PRCC) are utilized for uncertainty and sensitivity 
analysis of input parameters \cite{renardy2019global}. LHS generates random samples within 
the parameter space, while PRCC assesses the correlation between model output and each input 
parameter \cite{renardy2019global,gomero2012latin}. LHS enables efficient sampling of multi-dimensional 
parameter space, while PRCC computes and ranks the correlation coefficient between the model inputs 
(parameters) and the model outputs (state variables). It gauges the degree of a non-linear and 
monotonic relationship between model inputs (parameters) and model outputs. Therefore, this  
paper discusses the transmission dynamics of rabies, the parameter estimation, and the role of 
statistical techniques such as LHS and PRCC in evaluating the impact of uncertainties on model predictions. 

The structure of this paper is organized with sections dedicated to formulating the 
mathematical model (Section~\ref{modelFormulation}), qualitative analysis (Section~\ref{sec:02}), 
quantitative analysis (Section~\ref{sec:03}), and discussion and concluding the findings 
(Sections~\ref{sec:04} and \ref{sec:05}, respectively).

 
\section{Model Formulation} 
\label{modelFormulation}

The model for rabies transmission is divided into four distinctive components. 
The first component pertains to the human population, which is represented by $N_{H}=S_H+E_H+R_H+I_H$. 
The second component is the environmental virus reservoir, represented by $M$. The third component 
is comprised of free-range dogs, represented by $N_{F}=S_F+E_F+I_F$, which includes both stray and feral dogs. 
The fourth component represents domestic dogs, represented by $N_{D}=S_D+E_D+R_D+I_D$. Susceptible humans 
$\left(S_H\right)$ are individuals who can be infected. They are recruited into the population 
at a rate of $\theta_{1}$. Exposure happens upon contact with infected free-range dogs $\left(I_{F}\right)$, 
domestic dogs $\left(I_{D}\right)$, or the virus $\left(M\right)$ at rates of $\tau_{1}$, $\tau_{2}$, or $\tau_{3}$, 
respectively. The latency period for exposed individuals is 1-3 months. Thus the force of infection for human  is given by
\begin{equation}
    \chi_{1} = (\tau_{1} I_{F} + \tau_{2} I_{D} + \tau_{3} \lambda(M))S_{H},
\end{equation}
where
\begin{align*}
\lambda(M) = \frac{M}{M+C}.
\end{align*}
Exposed individuals who receive post-exposure prophylaxis recover at a rate of $\beta_{2}$. 
Vaccinated individuals lose their protective immunity and become susceptible  at a rate of $\beta_{3}$. 
In some cases, exposed individuals vaccinated with less efficious  vaccine may progress 
to infectious state $I_{H}$ at the transition rate of  $\beta_{1}$. Infected individuals may  
suffer to induced deaths at a rate of $\sigma_{1}$. Susceptible free-range dogs $S_F$ become 
infected upon contact with infected individuals $I_{F}$, $I_{D}$, or the rabies virus in the 
environment at rates of $\kappa_{1}$, $\kappa_{2}$, or $\kappa_{3}$, respectively. 
The  force of  infection of free range dogs is  given by
\begin{equation}
\chi_{2} =\left(\kappa_{1}I_{F}+\kappa_{2} I_{D}+\kappa_{3} \lambda \left(M\right)\right)S_{F}.
\end{equation}
Upon exposure to rabies, susceptible free-range dogs transition to the latent state $E_{F}$, 
where they remain for 1 to 3 months. Subsequently, they advance to the infectious state $I_{F}$ 
at a rate $\gamma$. Infected free-range dogs face mortality at a rate $\sigma_{2}$, 
while all free-range dogs experience natural death  at a rate of $\mu_{2}$.

Susceptible domestic dogs ($S_{D}$) are continually recruited at a rate $\theta_{3}$ and become infected 
through contact with either infected individuals ($I_{F}$,\; $I_{D}$,\; $M$) or the virus in the environment 
at rates $\psi_{1}$, $\psi_{2}$, or $\psi_{3}$, respectively, and the force of  infection is defined as
\begin{equation}
\chi_{3}=\left(\dfrac{\psi_{1}I_{F}}{1+\rho_{1}}+\dfrac{\psi_{2}I_{D}}{1+\rho_{2}}
+\dfrac{\psi_{3}}{1+\rho_{3}}\lambda \left(M\right)\right) S_{D}.
\end{equation}
Following exposure to rabies, susceptible domestic dogs transition to the latent state $E_{D}$, 
where they remain for a period of time at a rate $\beta_{1}$. Those in $E_{D}$ who receive 
post-exposure prophylaxis move to the recovered state $R_{D}$ at rate $\gamma_{2}$. However, 
as post-exposure prophylaxis does not provide permanent immunity, individuals in $R_{D}$ 
can lose immunity and become susceptible again at rate $\gamma_{3}$. The remaining portion 
of $E_{D}$ progresses to the infectious state $I_{D}$ at rate $\gamma_{1}$. Infected domestic 
dogs face mortality due to the disease at rate $\sigma_{3}$. Additionally, all domestic dogs 
experience natural death at a rate of $\mu_{3}$. The rabies virus in the environment 
is introduced through shedding from infectious free-range dogs, domestic dogs, and humans 
at rates $\nu_{2}$, $\nu_{3}$, and $\nu_{1}$, respectively, at the general rate  defined as
\begin{equation}
\theta_{4}=\left(\nu_1I_H+\nu_2I_F+\nu_3I_D\right)M.
\end{equation}	
The viruses are eliminated from the environment at a rate of $\mu_{4}$.
\begin{figure}[H]
\centering
\includegraphics[scale=0.55]{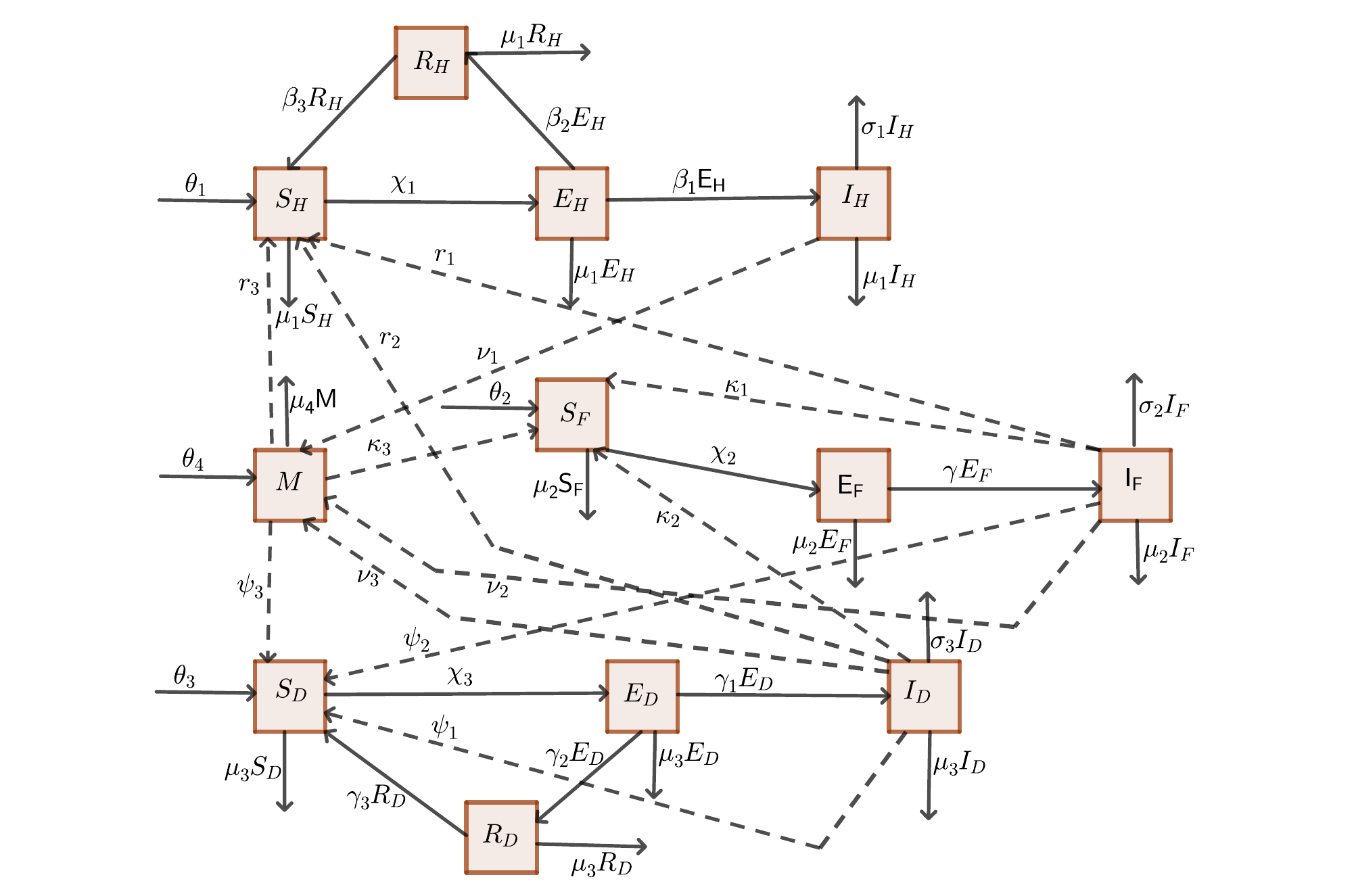}
\caption{Schematic diagram for the flow of transmission  dynamics  of rabies.}
\label{fig:Model}
\end{figure}
Using the details provided about the model parameters and their associations with the state 
variables, we develop a model represented as a set of ordinary differential equations, 
as shown in equation \eqref{eqn1},
\begin{equation}
\begin{cases}
\dot{S}_{H} &= \theta_{1} + \beta_{3} R_{H} -\chi_{1}-\mu_{1}S_{D} ,\\
\dot{E}_{H} &= \chi_{2} - (\mu_{1} + \beta_{1} + \beta_{2}) E_{H}, \\
\dot{I}_{H} &= \beta_{1}E_{H} - (\sigma_{1} + \mu_{1}) I_{H}, \\
\dot{R}_{H} &= \beta_{2} E_{H} - (\beta_{3} + \mu_{1}) R_{H}, \\
\dot{S}_{F} &= \theta_{2} - \chi_{2}-\mu_{2}S_{F}, \\
\dot{E}_{F} &= \chi_{2} - (\mu_{2} + \gamma) E_{F}, \\
\dot{I}_{F} &= \gamma E_{F} - (\mu_{2} + \sigma_{2}) I_{F} \\
\dot{S}_{D} &= \theta_{3} - \chi_{3}- \mu_{3}S_{D} + \gamma_{3} R_{D} ,\\
\dot{E}_{D} &= \chi_{3}- (\mu_{3} + \gamma_{1} + \gamma_{2}) E_{D}, \\
\dot{I}_{D} &= \gamma_{1}E_{D} - (\mu_{3} + \sigma_{3}) I_{D} ,\\
\dot{R}_{D} &= \gamma_{2}E_{D} - (\mu_{3} + \gamma_{3}) R_{D}, \\
\dot{M} &= \left(\nu_{1} I_{H} + \nu_{2} I_{F} + \nu_{3} I_{D}\right) - \mu_{4} M,
\end{cases}
\label{eqn1}
\end{equation}
subject to the following  non-negative  conditions:  
\begin{equation}
\begin{aligned}
&S_{H}(0) > 0,\; E_{H}(0) \geq 0, \; I_{H}(0) \geq 0, \; R_{H}(0) \geq 0,\; 
&S_{F}(0) > 0, \; E_{F}(0) \geq 0, \; I_{F}(0) \geq 0, \\
&S_{D}(0) \geq 0, \; E_{D}(0) \geq 0, \; I_{D}(0) \geq 0, \; R_{D}(0) \geq 0. \\
\end{aligned}
\end{equation}
The schematic diagram corresponding to \eqref{eqn1} is given in Fig.~\ref{fig:Model}.


\section{Qualitative analysis}
\label{sec:02}

In this section we prove existence of a positive solution, its boundedness, 
we compute the Rabies free equilibrium and the basic reproduction number.


\subsection{Positivity of the solution}	

For the model system \eqref{eqn1} to be epidemiologically meaningfully and well-posed, 
we need to prove that the state variables are non-negative  $\forall t\ge0$.

\begin{lemma}
All solution of the system in the region  \eqref{eqn1} that  start  
in $\Omega \subset{\mathbb R }^{12}_{+}$ remain  positive all the time. 
\label{theorem:1}
\end{lemma}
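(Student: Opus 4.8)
The plan is to show positivity by the standard barrier argument for compartmental ODE models: on each hyperplane where a state variable vanishes, the corresponding vector-field component is non-negative, so trajectories starting in the positive orthant cannot cross out of it. First I would invoke the Picard--Lindel\"of theorem to guarantee a unique local solution through the prescribed non-negative initial data, so that the solution is well-defined on a maximal interval $[0,T_{\max})$; positivity on this interval then extends to all $t\ge 0$ once boundedness is established (or simply on the maximal interval of existence). The key observation is that the right-hand side of \eqref{eqn1} is of the form $\dot{x}_i = f_i(x) - g_i(x)\,x_i$ with $g_i(x)\ge 0$ and $f_i(x)\ge 0$ whenever all components of $x$ are non-negative, which is exactly the quasi-positivity condition needed.

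Concretely, I would proceed compartment by compartment. For $S_H$, rewrite the first equation so that all terms that are not manifestly non-negative are absorbed: on the set where $S_H=0$ one has $\dot{S}_H = \theta_1 + \beta_3 R_H + (\text{loss terms that vanish since } S_H=0)\ge 0$, using that $\chi_1$ carries a factor $S_H$. The same reasoning applies to $S_F$ (with $\chi_2\propto S_F$ and recruitment $\theta_2$) and to $S_D$ (with $\chi_3\propto S_D$, recruitment $\theta_3$, and the non-negative inflow $\gamma_3 R_D$). For the remaining compartments $E_H,I_H,R_H,E_F,I_F,E_D,I_D,R_D,M$, each equation has the form $\dot{x}_i = (\text{non-negative inflow}) - (\text{positive constant})\,x_i$, so by the integrating-factor representation
\begin{equation*}
x_i(t) = x_i(0)\exp\!\left(-\!\int_0^t c_i(s)\,ds\right) + \int_0^t (\text{inflow})(s)\,\exp\!\left(-\!\int_s^t c_i(u)\,du\right)ds,
\end{equation*}
the solution stays non-negative as long as the inflow terms (which themselves involve only the earlier-shown non-negative variables) remain non-negative. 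Running this through the dependency chain --- susceptibles first, then exposed, then infectious and recovered, then $M$ --- closes the argument.

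The main obstacle, such as it is, is bookkeeping rather than depth: one must be careful that the force-of-infection terms $\chi_1,\chi_2,\chi_3$ are each proportional to the susceptible compartment being analysed, so that they genuinely vanish on the relevant coordinate hyperplane and do not spoil the sign; and one must verify that no compartment receives a negative contribution from another compartment that has not yet been shown non-negative. A cleaner alternative I would mention is to appeal directly to a standard invariance theorem (e.g. the fact that $\mathbb{R}^{12}_{+}$ is positively invariant for a system whose vector field is quasi-positive, i.e. $f_i(x)\ge 0$ whenever $x_i=0$ and $x_j\ge 0$ for $j\ne i$), which reduces the whole proof to checking the twelve sign conditions on the coordinate hyperplanes. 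I would present the integrating-factor version for the transparent compartments and the hyperplane check for $S_H,S_F,S_D$, and conclude that $\Omega\subset\mathbb{R}^{12}_{+}$ is positively invariant, so every solution starting in $\Omega$ remains non-negative for all $t\ge 0$.
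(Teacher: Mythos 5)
Your proposal is correct, but it takes a genuinely different route from the paper. The paper's own argument integrates each equation of \eqref{eqn1}, defines the right-hand sides as functions $f_1,\dots,f_{12}$, and verifies a Lipschitz estimate $\Vert f_i(t,x)-f_i(t,x_1)\Vert\le\epsilon_i\Vert x_1-x\Vert$ for each compartment; it then concludes that a solution exists. In other words, the paper proves existence and uniqueness via the Lipschitz condition (your Picard--Lindel\"of step), but it never actually carries out a positivity argument --- indeed it assumes the state variables are ``positive and bounded in $\Omega$'' as a hypothesis in the middle of the derivation. Your barrier/quasi-positivity argument --- checking that $\dot{x}_i\ge 0$ on each coordinate hyperplane $\{x_i=0\}$, using that $\chi_1,\chi_2,\chi_3$ carry factors of $S_H,S_F,S_D$ respectively, and then running the integrating-factor representation through the dependency chain --- is the standard way to establish what the lemma actually claims, and it buys you positive invariance of $\mathbb{R}^{12}_{+}$ rather than mere existence. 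Two small points of care: the cleanest version is the simultaneous quasi-positivity check you mention (verify $f_i(x)\ge 0$ whenever $x_i=0$ and $x_j\ge 0$ for all $j\ne i$) rather than the sequential chain, since $\chi_1$ and $\chi_2$ involve $\lambda(M)=M/(M+C)$ and the inflow terms are coupled across compartments; and you are implicitly (and reasonably) reading the apparent typos in \eqref{eqn1} ($-\mu_1 S_D$ in the $\dot S_H$ equation and $\chi_2$ in the $\dot E_H$ equation) as $-\mu_1 S_H$ and $\chi_1$, consistent with the corrected system displayed inside the paper's proof.
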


\begin{proof}
To prove for the existence of model solution \eqref{eqn1}, we consider initial  
conditions  and apply the integral operator $\int\limits_{0}^{t} \left(\cdot\right)ds$ 
to all equations in the model equation \eqref{eqn1}, to obtain:
\begin{equation}
\left\{
\begin{array}{llll}
S_{H}\left(t\right)-S_{H}\left(0\right)&=\int\limits_{0}^{t} \left(\theta_{1}+\beta_{3}R_{H}-\mu_{1} S_{H}-\chi_{1}\right)ds,\\
E_{H}\left(t\right)-E_{H}\left(0\right)&=\int\limits_{0}^{t}\left(\chi_{1}-\left(\mu_{1}+\beta_{1}+\beta_{2}\right)E_{H}\right)ds,\\
I_{H}\left(t\right)-I_{H}\left(0\right)&=\int\limits_{0}^{t}\left(\beta_{1}E_{H}-\left(\sigma_{1}+\mu_{1}\right) I_{H}\right)ds,\\
R_{H}\left(t\right)-R_{H}\left(0\right)&=\int\limits_{0}^{t}\left(\beta_{2} E_{H}-\left(\beta_{3}+\mu_{1} \right) R_{H}\right)ds,\\\\
S_{F}\left(t\right)-S_{F}\left(0\right)&=\int\limits_{0}^{t}\left(\theta_{2}-\chi_{2}-\mu_{2}S_{F}\right)ds,\\
E_{F}\left(t\right)-E_{F}\left(0\right)&=\int\limits_{0}^{t}\left(\chi_{2}-\left(\mu_{2}+\gamma\right)E_{F}\right)ds,\\
I_{F}\left(t\right)-I_{F}\left(0\right)&=\int\limits_{0}^{t}\left(\gamma E_{F}-\left(\mu_{2}+\sigma_{2}\right)I_{F}\right)ds,\\\\
S_{D}\left(t\right)-S_{D}\left(0\right)&=\int\limits_{0}^{t}\left(\theta_{3}-\mu_{3}S_{D}-\chi_{3} +\gamma_{3}R_{D}\right)ds,\\
E_{D}\left(t\right)-E_{D}\left(0\right)&=\int\limits_{0}^{t}\left(\chi_{3} -\left(\mu_{3}+\gamma_{1}+\gamma_{2}\right) E_{D}\right)ds,\\
I_{D}\left(t\right)-I_{D}\left(0\right)&=\int\limits_{0}^{t}\left(\gamma_{1}E_{D}-\left(\mu_{3}+\sigma_{3}\right) I_{D}\right)ds,\\
R_{D}\left(t\right)-R_{D}\left(0\right)&=\int\limits_{0}^{t}\left(\gamma_{2}E_{D}-\left(\mu_{3}+\gamma_{3}\right)R_{D}\right)ds,\\\\
M\left(t\right)-M\left(0\right)&=\int\limits_{0}^{t}\left(\left(\nu_1I_H+\nu_2I_F+\nu_3I_D\right)-\mu_4M\right)ds.
\end{array}
\right.
\label{eqn84}
\end{equation}

For convenience, from equation \eqref{eqn84} we define the  following functions:
\begin{equation}
\left\{
\begin{array}{llll}
f_1\left(t,\;S_H\right)= \theta_{1}+\beta_{3}R_{H}-\mu_{1} S_{H}-\chi_{1},\\
f_2\left(t,\;E_H\right)=\chi_{1}-\left(\mu_{1}+\beta_{1}+\beta_{2}+u_{4}\right)E_{H},\\
f_3\left(t,\;E_H\right)=\beta_{1}E_{H}-\left(\sigma_{1}+\mu_{1}\right) I_{H},\\
f_4\left(t,\;R_H\right)=\left(\beta_{2}+u_4\right) E_{H}-\left(\beta_{3}+\mu_{1} \right) R_{H},\\\\
f_5\left(t,\;S_F\right)=\theta_{2}-\chi_{2}-\mu_{2}S_{F},\\
f_6\left(t,\;E_F\right)=\chi_{2}-\left(\mu_{2}+\gamma\right)E_{F},\\
f_7\left(t,\;I_F\right)=\gamma E_{F}-\left(\mu_{2}+\sigma_{2}\right)I_{F},\\\\
f_8\left(t,\;S_D\right)=\theta_{3}-\mu_{3}S_{D}-\chi_{3} +\gamma_{3}R_{D},\\
f_9\left(t,\;E_D\right)=\chi_{3} -\left(\mu_{3}+\gamma_{1}+\gamma_{2}\right) E_{D},\\
f_{10}\left(t,\;I_D\right)=\gamma_{1}E_{D}-\left(\mu_{3}+\sigma_{3}\right) I_{D},\\
f_{11}\left(t,\;R_D\right)=\gamma_{2}E_{D}-\left(\mu_{3}+\gamma_{3}\right)R_{D},\\\\
f_{12}\left(t,\;M\right)=\left(\nu_1I_H+\nu_2I_F+\nu_3I_D\right)-\mu_4M.
 \end{array}
\right.
\end{equation} 
Since $S_H$, $E_H$, $I_H$, $R_H$, $S_F$, $E_F$, $I_F$, $S_D$, $E_D$,
$I_D$, $R_D$, $M$ are  positive and bounded in the region 
$\Omega\subset \mathbb{R}^{12}_{+}$, there exits non-negative values $\mathbb{L}_{i},\;i=1,\;2,\;\ldots, 12$, such  that
\begin{multline}
\begin{aligned}
\left\Vert f_1(t,\;S_H\left(t\right)) - f_1(t,\;S_H\left(t\right)_{1}) \right\Vert \leq 
\left\Vert \left(\theta_{1}+\beta_{3}R_{H}\left(t\right)-\mu_{1} S_{H}\left(t\right)-\chi_{1}\left(t\right)\right) \right. \\
- \left. \left(\theta_{1}+\beta_{3}R_{H}\left(t\right)-\mu_{1} S_{H}\left(t\right)_{1}-\chi_{1}\left(t\right)\right)_{1} \right\Vert\\
\leq 
\mu_{1}\left\Vert  S_{H}\left(t\right)_{1} 
-  S_{H}\left(t\right) \right\Vert\ +\chi_{1}\left\Vert  S_{H}\left(t\right)_{1} 
-  S_{H}\left(t\right) \right\Vert\ \\
\leq 
\left(\mu_{1}+\chi_{1}\right)\left\Vert  S_{H}\left(t\right)_{1} 
-  S_{H}\left(t\right) \right\Vert\ \\
\leq
\epsilon_{1}\left\Vert  S_{H}\left(t\right)_{1} 
-  S_{H}\left(t\right) \right\Vert.
\end{aligned}
\label{eqn85}
\end{multline}
Using similar approach as used in equation \eqref{eqn85}  we obtain for functions $f_i,\;i=2,3,\ldots,12$, that
\begin{multline}
\left\{
\begin{aligned}
\left\Vert f_2(t,\;E_H\left(t\right)) - f_2(t,\;E_H\left(t\right)_{1}) \right\Vert \leq\epsilon_{2}\left\Vert  E_{H}\left(t\right)_{1} 
-  E_{H}\left(t\right) \right\Vert\ ,\\
\left\Vert f_3(t,\;I_H\left(t\right)) - f_3(t,\;I_H\left(t\right)_{1}) \right\Vert \leq\epsilon_{3}\left\Vert  I_{H}\left(t\right)_{1} 
-  I_{H}\left(t\right) \right\Vert\ , \\
\left\Vert f_4(t,\;R_H\left(t\right)) - f_4(t,\;R_H\left(t\right)_{1}) \right\Vert \leq\epsilon_{4}\left\Vert  R_{H}\left(t\right)_{1} 
-  R_{H}\left(t\right) \right\Vert\ ,\\
\left\Vert f_5(t,\;S_F\left(t\right)) - f_5(t,\;S_F\left(t\right)_{1}) \right\Vert \leq\epsilon_{5}\left\Vert  S_{F}\left(t\right)_{1} 
-  S_{F}\left(t\right) \right\Vert\ , \\
\left\Vert f_6(t,\;E_F\left(t\right)) - f_6(t,\;E_F\left(t\right)_{1}) \right\Vert \leq\epsilon_{6}\left\Vert  E_{F}\left(t\right)_{1} 
-  E_{F}\left(t\right) \right\Vert\ , \\
\left\Vert f_7(t,\;I_F\left(t\right)) - f_7(t,\;I_F\left(t\right)_{1}) \right\Vert \leq\epsilon_{7}\left\Vert  I_{F}\left(t\right)_{1} 
-  I_{F}\left(t\right) \right\Vert\ , \\
\left\Vert f_8(t,\;S_D\left(t\right)) - f_8(t,\;S_D\left(t\right)_{1}) \right\Vert \leq\epsilon_{8}\left\Vert  S_{D}\left(t\right)_{1} 
-  S_{D}\left(t\right) \right\Vert\ , \\
\left\Vert f_9(t,\;E_D\left(t\right)) - f_9(t,\;E_D\left(t\right)_{1}) \right\Vert \leq\epsilon_{9}\left\Vert  E_{D}\left(t\right)_{1} 
-  E_{D}\left(t\right) \right\Vert\ , \\
\left\Vert f_{10}(t,\;I_D\left(t\right)) - f_{10}(t,\;I_D\left(t\right)_{1}) \right\Vert \leq\epsilon_{10}\left\Vert  I_{D}\left(t\right)_{1} 
-  I_{D}\left(t\right) \right\Vert\ , \\
\left\Vert f_{11}(t,\;S_D\left(t\right)) - f_{11}(t,\;R_D\left(t\right)_{1}) \right\Vert \leq\epsilon_{11}\left\Vert  R_{D}\left(t\right)_{1} 
-  R_{D}\left(t\right) \right\Vert\ , \\
\left\Vert f_{12}(t,\;M\left(t\right)) - f_{12}(t,\;M\left(t\right)_{1}) \right\Vert \leq\epsilon_{12}\left\Vert  M\left(t\right)_{1} 
-  M\left(t\right) \right\Vert,
\end{aligned}
\right.
\end{multline}
where  $\epsilon_{1}=\mu_{1}+\chi_{1}$, $\epsilon_{2}=\mu_{1}+\beta_{1}+\beta_{2}+u_{4}$, 
$\epsilon_{3}=\sigma_{1}+\mu_{1}$, $\epsilon_{4}=\beta_{3}+\mu_{1}$, 
$\epsilon_{5}=\mu_{2}+\chi_{2}$, $\epsilon_{6}=\mu_{2}+\gamma$, $\epsilon_{7}=\mu_{2}+\sigma_{1}$, 
$\epsilon_{8}=\mu_{3}+\chi_{3}$, $\epsilon_{9}=\mu_{3}+\gamma_{1}+\gamma_2+u_4$, 
$\epsilon_{10}=\mu_{3}+\sigma_{3}$, $\epsilon_{11}=\mu_{3}+\gamma_{3}$, $\epsilon_{12}=\mu_{4}$. 
Since the Lipschitz condition is satisfied, then a solution to our model \eqref{eqn1} exists.
\end{proof}


\subsection{Boundedness of the model solution}

\begin{theorem}
\label{The1}	
The solution of the rabies model system \eqref{eqn1} is uniformly bounded if 
$\Omega \in \mathbb{R}_{+}^{12}$ \textup{and} $\Omega = \Omega_{H} \cup \Omega_{D} 
\cup \Omega_{F} \cup \Omega_{M} \in \mathbb{R}_{+}^{4} \times \mathbb{R}_{+}^{3} 
\times \mathbb{R}_{+}^{4} \times \mathbb{R}_{+}^{1}$, where
\begin{equation*}
\begin{split}
\Omega_{H} &= \left\{\left(S_{H}, E_{H}, I_{H}, R_{H}\right)\in\mathbb{R}_{+}^{4}: 0\leq N_{H}\leq\frac{\theta_{1}}{\mu_{1}}\right\}, \;
\Omega_{F} = \left\{\left(S_{F}, E_{F}, I_{F} \right)\in\mathbb{R}_{+}^{3}: 0\leq N_{F}\leq\frac{\theta_{2}}{\mu_{2}}\right\}, \\
\Omega_{D} &= \left\{\left(S_{D}, E_{D}, I_{D}, R_{D}\right)\in\mathbb{R}_{+}^{4}: 0\leq N_{D}\leq\frac{\theta_{3}}{\mu_{3}}\right\}, \;
\Omega_{M}  = \max\left\{\frac{\theta_1 \nu_1}{\mu_1\mu_4}+\frac{\theta_2 \nu_2}{\mu_2\mu_4}+\frac{\theta_3 \nu_3}{\mu_3\mu_4},M\left(0\right)\right\},
\end{split}
\end{equation*}
and $\Omega$ is the positive invariant region.
\end{theorem}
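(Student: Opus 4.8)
The plan is to exploit the block structure of \eqref{eqn1}: the human, free-range-dog and domestic-dog compartments each form a closed group whose total population obeys a simple linear differential inequality, and the environmental reservoir $M$ is then controlled using the bounds obtained for those three groups. The order matters, since the bound on $M$ depends on the others. Throughout I use nonnegativity of all state variables, already established in Lemma~\ref{theorem:1}.

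First I would set $N_H = S_H + E_H + I_H + R_H$, $N_F = S_F + E_F + I_F$ and $N_D = S_D + E_D + I_D + R_D$, and add the corresponding equations of \eqref{eqn1} (reading the natural-death term in the $\dot S_H$ equation and the force of infection entering $\dot E_H$ in their intended form, so that all transmission and inter-compartmental transfer terms cancel). This gives
\begin{align*}
\dot N_H &= \theta_1 - \mu_1 N_H - \sigma_1 I_H \le \theta_1 - \mu_1 N_H, \\
\dot N_F &= \theta_2 - \mu_2 N_F - \sigma_2 I_F \le \theta_2 - \mu_2 N_F, \\
\dot N_D &= \theta_3 - \mu_3 N_D - \sigma_3 I_D \le \theta_3 - \mu_3 N_D.
\end{align*}
Each line is a linear differential inequality, and the standard comparison (Gronwall) argument yields $N_H(t) \le \frac{\theta_1}{\mu_1} + \big(N_H(0)-\frac{\theta_1}{\mu_1}\big) e^{-\mu_1 t}$, and analogously for $N_F$ and $N_D$. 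Hence $\limsup_{t\to\infty} N_H(t) \le \theta_1/\mu_1$, and if $N_H(0)\le\theta_1/\mu_1$ then $N_H(t)\le\theta_1/\mu_1$ for all $t\ge 0$; the same holds for $N_F$ with bound $\theta_2/\mu_2$ and for $N_D$ with bound $\theta_3/\mu_3$. This establishes the invariance and boundedness statements for $\Omega_H$, $\Omega_F$ and $\Omega_D$.

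Next I would treat the reservoir equation $\dot M = \nu_1 I_H + \nu_2 I_F + \nu_3 I_D - \mu_4 M$. On the invariant region just identified we have $I_H \le N_H \le \theta_1/\mu_1$, $I_F \le N_F \le \theta_2/\mu_2$ and $I_D \le N_D \le \theta_3/\mu_3$, so
\begin{align*}
\dot M \le \frac{\nu_1\theta_1}{\mu_1} + \frac{\nu_2\theta_2}{\mu_2} + \frac{\nu_3\theta_3}{\mu_3} - \mu_4 M,
\end{align*}
and the same comparison argument gives $M(t) \le K + (M(0)-K)e^{-\mu_4 t}$ with $K = \frac{\theta_1\nu_1}{\mu_1\mu_4}+\frac{\theta_2\nu_2}{\mu_2\mu_4}+\frac{\theta_3\nu_3}{\mu_3\mu_4}$. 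This is precisely why $\Omega_M$ is taken to be $\max\{K, M(0)\}$: if $M(0)\le K$ then $M(t)\le K$ for all $t$, while if $M(0)>K$ then $M(t)$ decreases monotonically toward $K$ and never exceeds $M(0)$; in both cases $M(t)\le \Omega_M$.

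Finally I would assemble the pieces: for initial data in $\Omega$ (understood as the product region in $\mathbb{R}_+^{12}$ cut out by the four bounds), Lemma~\ref{theorem:1} keeps the solution nonnegative and the four estimates above keep it inside $\Omega$, so $\Omega$ is positively invariant; since every solution eventually enters and remains in the compact set $\Omega$, all solutions are uniformly bounded. I do not expect a genuine obstacle here: the only points needing care are performing the group sums so that the coupling terms cancel cleanly, carrying out the argument in the right order (the $M$-bound after the three population bounds), and reading the $\max$ in the definition of $\Omega_M$ as the device that absorbs large initial reservoir data.
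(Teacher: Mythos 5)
Your proposal is correct and follows essentially the same route as the paper: summing each population block to obtain a linear differential inequality for $N_H$, $N_F$, $N_D$, solving by comparison/integrating factor, and then bounding $M$ last using the resulting bounds on $I_H$, $I_F$, $I_D$. Your write-up is in fact somewhat cleaner than the paper's (you state the comparison inequality and the role of the $\max$ in $\Omega_M$ explicitly), but there is no substantive difference in method.
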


\begin{proof}
Since the model \eqref{eqn1} monitors both human and dog populations, we assume that the model's 
state variables and parameters are non-negative for all $t \geq 0$. 
By utilizing Theorem~\ref{The1}, we derive the invariant region of the rabies model as follows.
Consider the population  of the human  from  equation \eqref{eqn1} as
\begin{equation}
\frac{dN_{H}}{dt}=\frac{dS_{H}}{dt}+\frac{dE_{H}}{dt}+\frac{dI_{H}}{dt}+\frac{dR_{H}}{dt}.
\label{eqn7}
\end{equation}
Then,  equation \eqref{eqn7} implies to  equation \eqref{eqn8} as follows:
\begin{eqnarray}
\frac{dN_{H}}{dt}&= \theta_{1}-\left(S_{H}+E_{H}+I_{H}+R_{H}\right)\mu_{1}-\sigma_{1}I_{H}.
\label{eqn8}
\end{eqnarray}  
But if $N_{H}=S_{H}+E_{H}+I_{H}+R_{H}$, then Equation \eqref{eqn8} can be written as
\begin{equation}
    \frac{dN_{H}}{dt} = \theta_{1} - N_{H}\mu_{1}, \label{eqn11}
\end{equation}
and by utilizing the integrating factor in  equation \eqref{eqn11} and $t \rightarrow 0$, we have
\begin{equation}
    N_{H}(0) \le \frac{\theta_{1}}{\mu_{1}} + Ce^{0} \implies N_{H}(0) - \frac{\theta_{1}}{\mu_{1}} \le C. \label{eqn16}
\end{equation}
By simplifying  equation \eqref{eqn16}  and after simple manipulations, it follows that
\begin{equation*}
\Omega_{H} = \left\{ \left( S_{H}, E_{H}, I_{H}, R_{H} \right) 
\in \mathbb{R}_{+}^{4} : 0 \leq N_{H} \leq \frac{\theta_{1}}{\mu_{1}} \right\}.
\end{equation*}
Employing  the same procedures for free range  and domestic dog populations, we obtain:  
\begin{equation}
\begin{aligned}
\Omega_{F} &= \left\{ (S_{F}, E_{F}, I_{F}) \in \mathbb{R}_{+}^{3} : 0 \leq N_{F} \leq \frac{\theta_{2}}{\mu_{2}} \right\}, \;
\Omega_{D} = \left\{ (S_{D}, E_{D}, I_{D}, R_{D}) \in \mathbb{R}_{+}^{4} : 0 \leq N_{D} \leq \frac{\theta_{3}}{\mu_{3}} \right\}.
\end{aligned}
\end{equation}
Again, from the environment that contains  rabies virus,
\begin{equation}
\begin{aligned}
\dot{M} = \left(\nu_1 I_H + \nu_2 I_F + \nu_3 I_D\right) - \mu_4 M. 
\label{eqn25}
\end{aligned}
\end{equation}
Since $N_H \leq \frac{\theta_1}{\mu_1}$, $N_F \leq \frac{\theta_2}{\mu_3}$, 
and $N_D \leq \frac{\theta_3}{\mu_3}$,
it follows that $I_H \leq \frac{\theta_1}{\mu_1}$, $I_F \leq \frac{\theta_2}{\mu_2}$, 
and $I_D \leq \frac{\theta_3}{\mu_3}$. 
Therefore, Equation \eqref{eqn25} can be expressed as
\begin{equation}
\dot{M} \leq \left(\frac{\nu_1 \theta_1}{\mu_1} + \frac{\nu_2 \theta_2}{\mu_2} 
+ \frac{\nu_3 \theta_3}{\mu_3}\right) - \mu_4 M. 
\label{eqn26}
\end{equation}
Now, let $Y$ be the solution which is unique to the initial value problem  such that 
\begin{equation}
\left.
\begin{array}{llll}
\dot{Y}\leq\left(\frac{\nu_1\theta_1}{\mu_1}+\frac{\nu_2\theta_2}{\mu_2}
+\frac{\nu_3\theta_3}{\mu_3}\right)-\mu_4M,\;\; \textup{for} \;\;\; t>0\\
Y(0)=M(0).
\end{array}
\right\}
\label{eqn28}
\end{equation}
By using  integration factor  equation \eqref{eqn28} as $t\rightarrow \infty$, we get the expression  
\begin{equation}
\left( M(0) - \left( \frac{\nu_1\theta_1}{\mu_1} + \frac{\nu_2\theta_2}{\mu_2} 
+ \frac{\nu_3\theta_3}{\mu_3} \right) \frac{1}{\mu_{4}} e^{\mu_{4} t} \right)
\label{eqn30}
\end{equation}
and as  equation \eqref{eqn30} goes to zero, we have              
\begin{eqnarray} 
M\left(t\right)\le\Omega_{M}.
\end{eqnarray}
Thus, the model system  \eqref{eqn1} is biologically and mathematically meaningfully 
such that their solution relies in the region  $\Omega$.
\end{proof}


\subsection{Rabies free equilibrium ($\mathbb{E}_0$) and the basic reproduction number ${\cal R}_0$}

In order to achieve a rabies-free equilibrium ($\mathbb{E}_0$) in both humans,  
free-range, and domestic dog populations, we equate all infectious compartments 
in equation \eqref{eqn1} to zero, which leads to
\begin{align*}
{\mathbb{E} }_{0} &=   \left(\frac{\theta_{1}}{\mu_{1}},0,0,0,
\frac{\theta_{2}}{\mu_{2}},0,0,\frac{\theta_{3}}{\mu_{3}},0,0,0,0\right).
\end{align*}
	
In order to determine the  basic reproduction   ${\cal R}_0$,  the  next generation matrix method, as applied 
by \cite{lasalle1976stability,yang2014basic,saha2021dynamics}, is adopted as follows:
\begin{eqnarray}
\dfrac{dx_{i}}{dt} &=\mathcal{F}_{i}\left(x\right)
-\left(\mathcal{V}_{i}^{+}\left(x\right)-\mathcal{V}_{i}^{-}\left(x\right)\right),
\label{eqn31}
\end{eqnarray} 
where $\mathcal{F}_{i}$ is the matrix new infections in the compartment $i$  
while  $\mathcal{V}_{i}^{+}$ and $\mathcal{V}_{i}^{-}$ 
are  matrices  of the transfer terms in and out of the compartment $i$, respectively.  
From equation \eqref{eqn31}, we define
$\mathcal{F}_{i}$ and $\mathcal{V}_{i}$ by
\begin{eqnarray}
\mathcal{F}_{i}=
\left(
\begin{array}{c}
\left(\tau_{1}I_{F}+\tau_{2}I_{D}+\tau_{3} \lambda \left(M\right)\right)S_{H}   \\
0\\
\left(\kappa_{1}I_{F}+\kappa_{2} I_{D}+\kappa_{3} \lambda \left(M\right)\right)S_{F}\\
0\\
\left(\frac{\psi_{1}I_{F}}{1+\rho_{1}}+\frac{\psi_{2}I_{D}}{1+\rho_{2}}
+\frac{\psi_{3}}{1+\rho_{3}}\lambda \left(M\right)\right) S_{D}\\
0\\
0
\end{array}
\right),\quad
\mathcal{V}_{i}=
\left(
\begin{array}{c}
\left(\mu_{1}+\beta_{1}+\beta_{2}\right)E_{H} \\
\left(\sigma_{1}+\mu_{1}\right)I_{H}-\beta_{1}E_{H}\\
\left(\mu_{2}+\gamma\right)E_{F}\\
\left(\mu_{2}+\sigma_{2}\right)I_{F}- \gamma E_{F}\\
\left(\mu_{3}+\gamma_{1}+\gamma_{2}\right) E_{D}\\
\left(\mu_{3}+\delta_{3}\right) I_{D}-\gamma_{1}E_{D}\\
\mu_4M-\left(\nu_1I_H+\nu_2I_F+\nu_3I_D\right)
\end{array}
\right).
\label{M}
\end{eqnarray}
The Jacobian matrices $F$ and $V$ at the disease free equilibrium point ${\mathbb E}_{0}$ 
are given by the expression of matrix  $F V^{-1}$ which  can be presented  as
\begin{equation}
\begin{aligned}
\setlength{\arraycolsep}{10pt}
F V^{-1} = \begin{pmatrix}
0 & 0 & R_{13} & R_{14} & R_{15} & R_{16} & 0 \\
0 & 0 & 0 & 0 & 0 & 0 & 0 \\
0 & 0 & R_{33} & R_{34} & R_{35} & R_{36} & 0 \\
0 & 0 & 0 & 0 & 0 & 0 & 0 \\
0 & 0 & R_{53} & R_{54} & R_{55} & R_{56} & 0 \\
0 & 0 & 0 & 0 & 0 & 0 & 0 \\
0 & 0 & 0 & 0 & 0 & 0 & 0 \\
\end{pmatrix},
\label{eqn24}
\end{aligned}
\end{equation}
where  
\begin{equation}
\left.
\begin{array}{llll}
R_{13}={\dfrac {\tau_{{1}}\theta_{{1}}
\gamma}{\mu_{{1}} \left( \mu_{{2}}+\gamma \right)  \left( \mu_{{2}}+
\sigma_{{2}} \right) }},\;\; R_{14}={\frac {\tau_{{1}}\theta_{{1}}}{\mu_{{1}}
\left( \mu_{{2}}+\sigma_{{2}} \right) }},\;\;\;
R_{15}={\dfrac {\tau_{{2}}\theta_{{1
}}\gamma}{\mu_{{1}} \left( \mu_{{3}}+\gamma_{{1}}+\gamma_{{2}}
\right)  \left( \mu_{{3}}+\sigma_{{3}} \right) }},\\       
R_{33}=\dfrac{\kappa_{1}\theta_{2}\gamma}{\mu_{2}\left(\mu_{2}+\gamma\right)
\left(\mu_{2}+\sigma_{2}\right)},\;\;\; R_{34}={\dfrac {\kappa_{{1}
}\theta_{{2}}}{\mu_{{2}} \left( \mu_{{2}}+\sigma_{{2}} \right) }},\;\;\; 
R_{35}=\dfrac{\kappa_{1}\theta_{2}\gamma}{\mu_{2}\left(\mu_{3}+\gamma_{1}
+\gamma_{2}\right)\left(\mu_{3}+\sigma_{3}\right)},\\ 
R_{53}=\dfrac{\psi_{1}\theta_{3}\gamma}{\left(1+\rho_{1}\right)\left(\mu_{2}+\gamma\right)
\left(\mu_{2}+\sigma_{2}\right)\mu_{3}} , \;\;\; R_{54}={\dfrac {\psi_{{1}}
\theta_{{3}}}{ \left( 1+\rho_{{1}} \right) \mu_{{3}} \left( \mu_{{2}}
+\sigma_{{2}} \right) }},\;\;\; R_{16}= {\dfrac {\tau_{{2}}
\theta_{{1}}}{\mu_{{1}} \left( \mu_{{3}}+\sigma_{{3}} \right) }},\\  
R_{55}=\dfrac{\psi_{2}\theta_{3}\gamma}{\left(1+\rho_{2}\right)
\left(\mu_{3}+\gamma_{1}+\gamma_{2}\right)\left(\mu_{3}+\sigma_{3}\right)\mu_{3}},\;
R_{56}={\dfrac 
{\psi_{{2}}\theta_{{3}}}{ \left( 1+\rho_{{2}} \right) \mu_{{3}}
\left( \mu_{{3}}+\sigma_{{3}} \right) }},\;\;
R_{36}={\dfrac {\kappa_{{1}}\theta_{{2}}}{\mu_{{2}} 
\left( \mu_{{3}}+\sigma_{{3}} \right) }}. 
\end{array}
\right\}
\label{eqn10}
\end{equation}
From  equation \eqref{eqn24}, we obtain the eigenvalues  as
\begin{gather}
\begin{aligned}
&\left(0, 0, 0, 0, 0, \frac{1}{2} R_{{55}} + \frac{1}{2} R_{{33}} 
+ \frac{1}{2} \sqrt{R_{{33}}^2 - 2 R_{{33}} R_{{55}} + 4 R_{{35}} R_{{53}} + R_{{55}}^2}, \right. \\
&\left. \frac{1}{2} R_{{55}} + \frac{1}{2} R_{{33}} - \frac{1}{2} 
\sqrt{R_{{33}}^2 - 2 R_{{33}} R_{{55}} + 4 R_{{35}} R_{{53}} + R_{{55}}^2} \right).
\label{eqn3}
\end{aligned}
\end{gather}
This non-negative eigenvalue corresponds to a non-negative eigenvector that represents 
the distribution of infected individuals who generate the highest number of secondary 
infections per generation, also known as ${\cal R}_0$. According to 
\cite{dharmaratne2020estimation}, the basic reproduction  number ${\cal R}_0$  
is the largest eigenvalue of the next generating matrix given by
\begin{eqnarray}
{\cal R}_0&=\rho \left(FV^{-1}\right).
\end{eqnarray}  
Therefore, the spectral radius of the next generation matrix is given by
\begin{equation}
{\cal R}_0= \rho\left(FV^{-1}\right)=\frac{\left(R_{55}+R_{33}\right)
+\sqrt{R_{33}\left(R_{33}-2R_{55}\right)+4R_{35}R_{53}+R_{55}^{2}}}{2}.
\label{R0}
\end{equation}
   

\subsubsection{Global stability of the Rabies free equilibrium}

The model behaviour at the disease free equilibrium point, ${\mathbb E}_0$, is investigated 
using a Metzler matrix as applied by Castillo-Chavez et al.  \cite{castillo2002computation}.  
Let $X_{s}$ denote non-transmitting class, $X_{m}$ be transmitting class and $X_{DFE}$ 
be the Disease Free equilibrium, whereby
\begin{equation}
\left.
\begin{aligned}
\dfrac{dX_{s}}{dt}&=B\left(X_{s}-X_{DFE}\right)+B_{1} X_m\\
\dfrac{dX_{m}}{dt}&=B_{2}X_{m}
\end{aligned}
\right\}.
\end{equation}
Then,  from the model system \eqref{eqn1},  it can be deduced that
\begin{equation*}
\begin{aligned}
X_{s}&=\left(S_{H},\;R_{H},\;S_{F},\;S_{D},\; R_{D}\right)^{T}, 
\quad X_{m}=\left(E_{H},\;I_{H},\;E_{D},\;I_{D},M\right)^{T},\\
X_{s}-X_{DFE}
&=
\begin{bmatrix}
S_H-\dfrac{\theta_1}{\mu_{1}}\\
R_H\\
S_F-\dfrac{\theta_2}{\mu_{2}}\\
S_D-\dfrac{\theta_3}{\mu_{3}}\\
R_D
\end{bmatrix},\;\; 
B=\begin{bmatrix}
-\mu & \beta_{3}&0&0&0\cr
0&-\left(\beta_{3}+\mu_{1}\right)&0&0&0\cr
0&0&-\mu_{2}&0&0\cr
0&0&0&-\mu_{3}&\gamma_{3}\cr
0&0&0&0&-\left(\mu_{3}+\gamma_{3}\right)
\end{bmatrix},
\end{aligned}
\end{equation*}
\begin{equation*}
\begin{aligned}
B_{1}&=\begin{bmatrix}
0&0&0&{\dfrac {\tau_{{1}}\theta_{{1}}}{
\mu_{{1}}}}&0&{\dfrac {\tau_{{2}}\theta_{{1}}}{\mu_{{1}}}}&0\\ 
\noalign{\medskip}\beta_{{2}}&0&0&0&0&0&0\cr \noalign{\medskip}0&0&0
&{\frac {\kappa_{{1}}\theta_{{2}}}{\mu_{{2}}}}&0&{\dfrac {\kappa_{{2}}
\theta_{{2}}}{\mu_{{2}}}}&0\cr \noalign{\medskip}0&0&0&{\dfrac {\psi_{{1}}
\theta_{{3}}}{\mu_{{3}} \left( 1+\rho_{{1}} \right) }}&0
&{\dfrac {\psi_{{2}}\theta_{{3}}}{\mu_{{3}} \left( 1+\rho_{{2}} \right) }}&0\cr 
\noalign{\medskip}0&0&0&0&\gamma_{{2}}&0&0
\end{bmatrix}, \;\;\textup{and}\\\\ 
B_{2}&=\begin{bmatrix}
-\mu_{{1}}-\beta_{{1}}-\beta_{{2}}&0&0
&{\frac {\tau_{{1}}\theta_{{1}}}{\mu_{{1}}}}&0&{\frac {\tau_{{2}}
\theta_{{1}}}{\mu_{{1}}}}&0\\ \noalign{\medskip}\beta_{{1}}&-\sigma_{{1
}}-\mu_{{1}}&0&0&0&0&0\\ \noalign{\medskip}0&0&-\mu_{{2}}-\gamma&{
\frac {\kappa_{{1}}\theta_{{2}}}{\mu_{{2}}}}&0&{\frac {\kappa_{{1}}
\theta_{{2}}}{\mu_{{2}}}}&0\\ \noalign{\medskip}0&0&\gamma&-\mu_{{2}}-
\sigma_{{2}}&0&0&0\\ \noalign{\medskip}0&0&0&{\frac {\psi_{{1}}\theta_
{{3}}}{\mu_{{3}} \left( 1+\rho_{{1}} \right) }}&-\mu_{{3}}-\gamma_{{1}
}-\gamma_{{2}}&{\frac {\psi_{{2}}\theta_{{3}}}{\mu_{{3}} \left( 1+\rho
_{{1}} \right) }}&0\\ \noalign{\medskip}0&0&0
&0&\gamma&-\mu_{{3}}-\sigma_{{3}}&0\\ \noalign{\medskip}0&\nu_{{1}}&0&
\nu_{{2}}&0&\nu_{{3}}&-\mu_{{4}}
\end{bmatrix}.
\end{aligned}
\end{equation*}

The non-negative out-diagonal entries of matrix $B_{2}$ indicate that it is the Metzler matrix.  
To prove the stability of matrix \( B_{2} \) using the Metzler matrix block form,  
we need to demonstrate that it satisfies the conditions for a Metzler matrix 
and has eigenvalues with negative real parts.  
We adopt the idea of a stable Metzler matrix and apply the following Lemma~\ref{Lema}.

\begin{lemma}[See \cite{kamgang2008computation}]
\label{Lema}
Let $N$ be a square Metzler matrix written in block form
\begin{equation}
N=
\begin{bmatrix}
A && B \\
C && D \\
\end{bmatrix},
\end{equation}
where $A$, $B$ and  $D$  are rectangular  matrices while $C$ is a square matrix. 
Then $N$ is Metzler stable if, and only  if, matrices $A$ and  $D - CA^{-1} B$ 
are Metzler stable.
\end{lemma}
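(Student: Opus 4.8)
The plan is to recast Metzler stability as a sign condition on an inverse matrix and then carry that condition across the Schur complement of $N$. Here $A$ and $D$ denote the square diagonal blocks and $B,C$ the off-diagonal blocks, as is required for $A^{-1}$ and $D-CA^{-1}B$ to be defined. The single external ingredient I would invoke is the classical characterization: a Metzler matrix $M$ has all its eigenvalues in the open left half-plane (equivalently, $-M$ is a nonsingular $M$-matrix) if and only if $M$ is invertible and $-M^{-1}\ge 0$ entrywise. From this I extract two facts used throughout: (i) a principal submatrix of a Metzler stable matrix is again Metzler stable, so $A$ and $D$ are Metzler and $A^{-1}$ exists with $-A^{-1}\ge 0$; and (ii) the off-diagonal blocks of a Metzler matrix are nonnegative, so $B,C\ge 0$.

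The algebraic engine is the block $LDU$ factorization, valid whenever $A$ is invertible,
\[
N=\begin{bmatrix} I & 0\\ CA^{-1} & I\end{bmatrix}
\begin{bmatrix} A & 0\\ 0 & S\end{bmatrix}
\begin{bmatrix} I & A^{-1}B\\ 0 & I\end{bmatrix},
\qquad S:=D-CA^{-1}B,
\]
which yields, once $S$ is also invertible,
\[
N^{-1}=\begin{bmatrix}
A^{-1}+A^{-1}BS^{-1}CA^{-1} & -A^{-1}BS^{-1}\\
-S^{-1}CA^{-1} & S^{-1}
\end{bmatrix}.
\]
For the forward direction, suppose $N$ is Metzler stable. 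By (i)--(ii) we have $-A^{-1}\ge 0$, $B,C\ge 0$, and $D$ Metzler, so $S=D+C(-A^{-1})B$ has nonnegative off-diagonal entries, i.e.\ $S$ is Metzler; since $N$ is invertible with $-N^{-1}\ge 0$, reading off the lower-right block gives $-S^{-1}\ge 0$, hence $S$ is Metzler stable. That $A$ is Metzler stable is immediate from (i).

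For the converse, suppose $A$ and $S$ are Metzler stable and set $P:=-A^{-1}\ge 0$, $Q:=-S^{-1}\ge 0$, with $B,C\ge 0$ as always. Then the $LDU$ factors are invertible, so $N$ is invertible, and substituting $A^{-1}=-P$, $S^{-1}=-Q$ into the inverse formula gives
\[
-N^{-1}=\begin{bmatrix} P+PBQCP & PBQ\\ QCP & Q\end{bmatrix}\ \ge\ 0,
\]
each block being a sum of products of nonnegative matrices. Since $N$ is Metzler by hypothesis and $-N^{-1}\ge 0$, the characterization yields that $N$ is Metzler stable, which closes the equivalence.

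The block manipulations are essentially bookkeeping; the content, and the delicate point, both sit in the two structural facts. The implication ``$M$ Metzler with $-M^{-1}\ge 0$ $\Rightarrow$ $M$ stable'' I would argue through Perron--Frobenius for Metzler matrices: the spectral abscissa $s(M)$ is a real eigenvalue with a nonnegative eigenvector $x\ne 0$, and $x = s(M)\,M^{-1}x = -s(M)\bigl(-M^{-1}\bigr)x$; if $s(M)>0$ this forces $x\le 0$, hence $x=0$, a contradiction, while $s(M)=0$ contradicts invertibility, leaving $s(M)<0$. The stability of $A$ as a principal submatrix of $N$ is the standard persistence of the nonsingular $M$-matrix property under principal submatrices; the only genuine obstacle, should one insist on a self-contained proof rather than citing $M$-matrix theory, is producing a strictly positive stability certificate for $A$ out of one for $N$.
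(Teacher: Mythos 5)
Your proof is correct, but note that the paper itself offers no proof of this lemma: it is imported verbatim (with a garbled hypothesis --- the source, Kamgang and Sallet, requires $A$ and $D$ to be the \emph{square} diagonal blocks and $B$, $C$ rectangular, which you rightly repair, since otherwise $A^{-1}$ and $D-CA^{-1}B$ are not even defined) and used as a black box to test the matrix $B_2$. So there is nothing in the paper to compare against; what you have supplied is a self-contained argument resting on the standard characterization that a Metzler matrix $M$ is Hurwitz if and only if it is invertible with $-M^{-1}\ge 0$, pushed through the block $LDU$ factorization. The two directions check out: in the forward direction you should make explicit that $S$ is invertible because $\det N=\det A\cdot\det S$ with $N$ and $A$ invertible, after which reading $S^{-1}$ off the lower-right block of $N^{-1}$ is legitimate; in the converse the entrywise nonnegativity of $-N^{-1}=\left[\begin{smallmatrix} P+PBQCP & PBQ\\ QCP & Q\end{smallmatrix}\right]$ is exactly right. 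Your Perron--Frobenius argument for the implication ``$M$ Metzler, invertible, $-M^{-1}\ge 0$ $\Rightarrow$ $M$ stable'' is sound (apply Perron--Frobenius to $M+cI$ for large $c$ to get the nonnegative eigenvector at the spectral abscissa), and you are candid that the reverse implication and the heredity of stability to principal submatrices are being cited from nonsingular $M$-matrix theory rather than proved; both are standard (the former via the Neumann series $(cI-B)^{-1}=\sum_k B^k/c^{k+1}$ for $M=B-cI$, $c>\rho(B)$). In short: a correct and essentially complete proof of a statement the paper only asserts.
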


In our case one has
\begin{equation*}
A = 
\begin{bmatrix}
-\mu_{1}-\beta_{1}-\beta_{2} & 0 & 0 \\
\beta_{1} & -\sigma_{1}-\mu_{1} & 0 \\
0 & 0 & -\mu_{2}-\gamma \\
\end{bmatrix},\;\\
B = 
\begin{bmatrix}
\frac{\tau_{1}\theta_{1}}{\mu_{1}} & 0 & \frac{\tau_{2}\theta_{1}}{\mu_{1}} & 0 \\
0 & 0 & 0 & 0 \\
\frac{\kappa_{1}\theta_{2}}{\mu_{2}} & 0 & \frac{\kappa_{1}\theta_{2}}{\mu_{2}} & 0 \\
0 & 0 & 0 & 0 \\
\frac{\psi_{1}\theta_{3}}{\mu_{3}(1+\rho_{1})} & 0 & \frac{\psi_{2}\theta_{3}}{\mu_{3}(1+\rho_{1})} & 0 \\
0 & 0 & 0 & 0 \\
\nu_{1} & 0 & \nu_{2} & 0 \\
\end{bmatrix},\;\\C = 
\begin{bmatrix}
0 & 0 & 0 \\
\gamma & 0 & 0 \\
0 & 0 & 0 \\
\end{bmatrix},
\end{equation*}
and 
\begin{equation*}
D = 
\begin{bmatrix}
-\mu_{2}-\sigma_{2} & 0 & 0 \\
0 & -\mu_{3}-\gamma_{1}-\gamma_{2} & 0 \\
0 & \gamma & -\mu_{3}-\sigma_{3} \\
\end{bmatrix}.
\end{equation*}
To show that matrices \( A \) and \( D - CA^{-1}B \) are Metzler stable, 
we need to demonstrate that all their off-diagonal elements are non-negative. 
Since the off-diagonal elements of \( A \) are \( \beta_{1} \) and \( 0 \), 
both of which are non-negative, we conclude that matrix \( A \) is Metzler stable.

Next, matrix \( D - CA^{-1}B \) can be calculated as follows:
\begin{equation*}
CA^{-1}B = 
\begin{bmatrix}
\frac{\tau_{1}\theta_{1}}{\mu_{1}} & 0 & \frac{\tau_{2}\theta_{1}}{\mu_{1}} \\
0 & 0 & 0 \\
0 & \gamma & -\mu_{2} - \sigma_{2} \\
\end{bmatrix}.
\end{equation*}
Subtracting \( CA^{-1}B \) from \( D \) gives
\begin{equation*}
D - CA^{-1}B = 
\begin{bmatrix}
0 & 0 & 0 \\
0 & -\mu_{1} - \sigma_{1} & 0 \\
0 & 0 & -\mu_{2} - \gamma - \sigma_{2} \\
\end{bmatrix}.
\end{equation*}
Since the eigenvalues of the Matrix $B$ are $\lambda_{1}=-\mu_{3}$,  $\lambda_{2}=-\mu_{2}$,  
$\lambda_{3}=-\mu_{1}$, $\lambda_{4}=-\left(\mu_{3}+\gamma_{3}\right)$, and 
$\lambda_{5}=-\left(\beta_{3}+\mu_{1}\right)$ and  the off-diagonal elements 
of \( D - CA^{-1}B \) are all \( 0 \), which are non-negative, thus matrix 
\( D - CA^{-1}B \) is also Metzler stable. Since both matrices \( A \) and \( D - CA^{-1}B \) 
have non-negative off-diagonal elements, they are Metzler stable. Thus, \( B_{2} \) is also 
Metzler stable. Therefore, the rabies free equilibrium point of the model system \eqref{eqn1} 
is globally asymptotically stable  if ${\cal R}_0<1 \textup{ and unstable otherwise}$.


\subsubsection{Rabies Persistence Equilibrium Point (RPEP)}

The endemic equilibrium point is the steady state where rabies is present in humans, free-range dogs, 
and domestic dogs. To find this point, we set the equations of the model system \eqref{eqn1} 
to zero and solve the resulting system simultaneously. The state variables for each compartment are represented by
\begin{align*}
\textup{RPEP}\left(S_{H}^{*},\; E_{H}^{*},\; I_{H}^{*},\; R_{H}^{*},\; S_{F}^{*},\; E_{F}^{*},\; 
I_{F}^{*},\; S_{D}^{*},\; E_{D}^{*},\; I_{D}^{*},\; R_{D}^{*},\; M^{*}\right)\;\;  
\end{align*}  
such  that
\begin{equation*}
\left.
\begin{aligned}
R^{*}_H &= {\dfrac {\beta_{{2}} \left( \sigma_{{1}}+\mu_{{1}} \right) I^{*}_{{
H}}}{\beta_{{1}} \left( \beta_{{3}}+\mu_{{1}} \right) }}, \\\\
I^{*}_{H} &= \dfrac{\beta_1(\beta_3 + \mu_3)(\sigma_1 + \mu_1)^2(\beta_1 + \beta_2 + \beta_3)\mu_1 
+ \beta_1\beta_3(\sigma_1 + \mu_1)^2}{(\sigma_1 + \mu_1)^2((\beta_1 + \beta_2 + \beta_3)\mu_1 + \beta_1\beta_3)} \\
&\quad - \dfrac{\beta_1(\beta_3 + \mu_3)(\sigma_1 + \mu_1)^2\beta_3 - \theta_1(\beta_3 + \mu_3)
(\sigma_1 + \mu_1)^2}{(\sigma_1 + \mu_1)^2((\beta_1 + \beta_2 + \beta_3)\mu_1 + \beta_1\beta_3)},\\\\
E^{*}_H &= {\dfrac { \left( \sigma_{{1}}+\mu_{{1}} \right) I^{*}_{{H}}}{\beta_{{1}}}},\\
S^{*}_{{H}} &={\dfrac {\beta_{{3}}\beta_{{2}} \left( \sigma_{{1}}+\mu_{{1}} \right) I^{*}
_{{H}}}{\beta_{{1}} \left( \beta_{{3}}+\mu_{{1}} \right) \mu_{{1}}}}
-{\dfrac { \left( \mu_{{1}}+\beta_{{1}}+\beta_{{2}} \right)  
\left( \sigma_{{1}}+\mu_{{1}} \right) I^{*}_{{H}}}{\beta_{{1}}\mu_{{1}}}}
+{\frac{\theta_{{1}}}{\mu_{{1}}}}
\end{aligned}
\right\}.
\end{equation*}
Re-writing  Eq.~\eqref{eqn3} as
\begin{equation}
\left(R_{0} - 1\right)\left(R_{0} - R_{33} - R_{55} + 1\right) 
+ \left(1 - R_{33}\right)\left(1 - R_{55}\right) - R_{35}R_{53} = 0,
\end{equation}
where  
$R_{33}$, $R_{55}$, $R_{35}$, and $R_{53}$ are defined in Eq.~\eqref{eqn10}, we get
\begin{equation*}
\left.
\begin{aligned}
I^{*}_{D} = \dfrac{\gamma_{1}\psi_{1}I^{*}_{F}(1+\rho_{2})(1+\rho_{3})M^{*} 
+ \gamma_{1}\psi_{3}M^{*}(1+\rho_{1})(1+\rho_{2})}{(\mu_{3} + \gamma_{1} + \gamma_{2})^2 
- \gamma_{1}\psi_{2}(1+\rho_{1})(1+\rho_{3})M^{*}(\mu_{3} + \gamma_{1} + \gamma_{2})},\\\\
E^{*}_{{D}} ={\dfrac { \left( \mu_{{3}}+\sigma_{{3}} \right) I^{*}_{{D}}}{\gamma_{{1}}}},\;\;
R^{*}_{{D}}={\dfrac {\gamma_{{2}} \left( \mu_{{3}}+\sigma_{{3}} \right) 
I^{*}_{{D}}}{\gamma_{{1}} \left( \mu_{{3}}+\gamma_{{3}} \right) }},
\end{aligned}
\right\}
\end{equation*}
\begin{equation*}
\left.
\begin{aligned}
S^{*}_{{D}}&=\dfrac{{ \gamma_3 (\mu_3 + \sigma_3)I^{*}_D}}{{\mu_3 \gamma_1}}
-\dfrac{{(\mu_3 + \gamma_1 + \gamma_2) \gamma_2 (\mu_3 + \sigma_3) 
I^{*}_D}}{{\gamma_1 (\mu_3 + \gamma_3) \mu_3}}+\dfrac{\theta_{3}}{\mu_{3}},\;\;
E^{*}_{{F}}={\dfrac { \left( \mu_{{2}}+\sigma_{{2}} \right) I^{*}_{{F}}}{\gamma}},\\
S^{*}_{F}&={\dfrac {\theta_{{2}}}{\mu_{{2}}}}-{\dfrac { \left( \mu_{{2}}
+\gamma \right)  \left( \mu_{{2}}+\sigma_{{2}} \right) I^{*}_{{F}}}{\gamma\,\mu}},\;\;
M^{*}={\dfrac {\nu_{{3}}I^{*}_{{D}}+\nu_{{2}}I^{*}_{{F}}+\nu_{{1}}I^{*}_{{H}}}{\mu_{{4}}}},
\end{aligned}
\right\}
\end{equation*}
where
\begin{multline*}
\theta_2 = \dfrac{(\mu_2 + \gamma) \mu_2 \left(1 + (R_0 - 1)\right) (1 + \rho_1) 
\mu_3 (\mu_2 + \sigma_2)}{\left[ \mu_3 (1 + \rho_2) (1 + \rho_1) (\mu_3 + \sigma_3) 
(\mu_3 + \gamma_1 + \gamma_2) \left(1 + (R_0 - 1)\right) - \theta_3 \gamma_1 
\left(\psi_2 (1 + \rho_1) \mu_3 + \psi_1 (1 + \rho_2)\right) \right] \gamma \kappa_1} \\
\times \left[ (1 + \rho_2)(\mu_3 + \sigma_3) (\mu_3 + \gamma_1 + \gamma_2) 
\left(1 + (R_0 - 1)\right) - \theta_3 \psi_2 \gamma_1 \right], \\[10pt]
\theta_3 = \dfrac{\left[ -\mu_2 (\mu_2 + \sigma_2) (\mu_2 + \gamma) \left(1 + (R_0 - 1)\right) 
+ \gamma \kappa_1 \theta_2 \right] \left(1 + (R_0 - 1)\right) (1 + \rho_1) 
\mu_3 (1 + \rho_2) (\mu_3 + \sigma_3) (\mu_3 + \gamma_1 + \gamma_2)}{\left[ 
\left( -\mu_2 (\mu_2 + \sigma_2) (\mu_2 + \gamma) \left(1 + (R_0 - 1)\right) 
+ \gamma \kappa_1 \theta_2 \right) (1 + \rho_1) \psi_2 \mu_3 
+ \gamma \kappa_1 \theta_2 \psi_1 (1 + \rho_2) \right] \gamma_1}.
\end{multline*}

The endemic equilibrium point of the rabies disease persists when 
$I_{H},\; I_{F},\; I_{D},\; M > 0$ and ${\cal R}_0 \geq 1$, as summarized in Theorem~\ref{The}.

\begin{theorem}
The system \eqref{eqn1} has a unique endemic equilibrium RPEP 
if $\mathcal{R}_0 \geq 1$ and $I_{H},\; I_{F},\; I_{D},\; M > 0$.
\label{The}
\end{theorem}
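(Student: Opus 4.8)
The plan is to solve the steady-state system obtained by setting every right-hand side of \eqref{eqn1} to zero and to collapse it to a single scalar equation in one infectious variable whose positive roots are governed by ${\cal R}_0$. First I would use the linear equilibrium equations for the compartments that do not carry a force of infection to express $R_H^{*},E_H^{*},S_H^{*}$ in terms of $I_H^{*}$; $E_F^{*},S_F^{*}$ in terms of $I_F^{*}$ (and, through $M^{*}$, of $I_D^{*},I_H^{*}$); $E_D^{*},R_D^{*},S_D^{*},I_D^{*}$ in terms of $I_F^{*}$ and $M^{*}$; and $M^{*}$ in terms of $I_H^{*},I_F^{*},I_D^{*}$ via the last equation of \eqref{eqn1}. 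These are precisely the closed-form relations displayed just before the statement. Substituting them into the three remaining balance equations $\dot E_H=0$, $\dot E_F=0$, $\dot E_D=0$ (i.e. $\chi_1=(\mu_1+\beta_1+\beta_2)E_H^{*}$, and the analogues for the two dog populations) reduces the $12$-dimensional system to a single polynomial identity $P(I_F^{*})=0$ whose coefficients are rational functions of the parameters and, after the rewriting of \eqref{eqn3} used above, of ${\cal R}_0$.

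Second, I would analyze $P$. Writing $P(I_F^{*})=a_2\,(I_F^{*})^2+a_1\,I_F^{*}+a_0$ (or the lower-degree form that the cancellations actually produce), the constant term $a_0$ is proportional to $1-{\cal R}_0$, equivalently to $(1-R_{33})(1-R_{55})-R_{35}R_{53}$ coming from \eqref{R0}, while the leading coefficient keeps a fixed sign on the biologically feasible parameter range. By the Descartes rule of signs together with the intermediate value theorem, $P$ then admits exactly one root $I_F^{*}>0$ precisely when ${\cal R}_0\geq 1$, and that single sign change disappears when ${\cal R}_0<1$, ruling out a positive root. This is where the argument really lives, and it is also the main obstacle: the elimination is algebraically heavy, and one must verify that the denominators appearing in the reduced expressions—in particular $(\mu_3+\gamma_1+\gamma_2)^2-\gamma_1\psi_2(1+\rho_1)(1+\rho_3)M^{*}(\mu_3+\gamma_1+\gamma_2)$ in the formula for $I_D^{*}$, and the bracketed denominators in the formulas for $\theta_2$ and $\theta_3$—stay strictly positive, so that the back-substitution is well defined and sign-preserving.

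Third, given the unique admissible $I_F^{*}>0$, I would propagate positivity: $M^{*}>0$ is immediate from $M^{*}=(\nu_1 I_H^{*}+\nu_2 I_F^{*}+\nu_3 I_D^{*})/\mu_4$, then $I_D^{*}>0$ and $I_H^{*}>0$ follow from their displayed expressions, and $E_H^{*},E_F^{*},E_D^{*},R_H^{*},R_D^{*}$ are positive since they are positive multiples of $I_H^{*},I_F^{*},I_D^{*}$. For the susceptible classes I would invoke the invariant region of Theorem~\ref{The1}: each of $S_H^{*},S_F^{*},S_D^{*}$ equals $\theta_i/\mu_i$ minus a term that the constraints on $\theta_2,\theta_3$ together with ${\cal R}_0\geq 1$ keep below $\theta_i/\mu_i$, so the equilibrium lies in $\Omega$ and all its coordinates are non-negative, with $I_H^{*},I_F^{*},I_D^{*},M^{*}>0$ as hypothesized. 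Uniqueness is exactly the "exactly one positive root of $P$" statement, and existence is its realization together with the positivity just checked; combining the two yields the unique RPEP in $\Omega$ whenever ${\cal R}_0\geq 1$, which completes the proof.
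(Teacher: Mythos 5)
Your first step---expressing $R_H^{*},E_H^{*},S_H^{*}$ through $I_H^{*}$, the dog compartments through $I_F^{*}$ and $M^{*}$, and $M^{*}$ through the three infectious classes---is exactly what the paper does; those are the displayed closed-form relations preceding the theorem. But the decisive part of your argument, the collapse to a single polynomial $P(I_F^{*})$ whose constant term is proportional to $1-\mathcal{R}_0$ and whose leading coefficient has fixed sign, is asserted rather than carried out, and for this model it is not obviously true. The environmental term enters through $\lambda(M)=M/(M+C)$, so after back-substitution the reduced scalar equation is rational, not polynomial, and the three mutually coupled forces of infection $\chi_1,\chi_2,\chi_3$ make both its effective degree and the signs of its coefficients non-trivial; Descartes' rule plus the intermediate value theorem only delivers ``exactly one positive root'' once those sign facts are actually established (and once one rules out the two-positive-root configuration that signals a backward bifurcation). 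The boundary case $\mathcal{R}_0=1$ is also delicate under your scheme: there the constant term you posit vanishes, the roots degenerate to $0$ and $-a_1/a_2$, and you would additionally need the sign of $a_1$ to get a strictly positive root, yet the theorem claims a unique endemic equilibrium for $\mathcal{R}_0\geq 1$, equality included. Finally, your positivity propagation depends on the denominator $(\mu_3+\gamma_1+\gamma_2)^2-\gamma_1\psi_2(1+\rho_1)(1+\rho_3)M^{*}(\mu_3+\gamma_1+\gamma_2)$ in the expression for $I_D^{*}$ and on the bracketed denominators in $\theta_2,\theta_3$ staying positive; you correctly flag this but never verify it, and for large $M^{*}$ the first of these can change sign.

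In fairness, the paper does not close this gap either: it displays the equilibrium expressions and then simply states Theorem~\ref{The} with no argument for existence, uniqueness, or positivity. So your outline is at least as substantive as what the paper offers, and it names the right obstacles. As a proof, however, it remains a programme: until the reduced scalar equation is exhibited explicitly, its coefficient signs are pinned down on the feasible parameter set, and the denominators above are shown positive at the candidate equilibrium, neither uniqueness nor the positivity of $I_H^{*},I_F^{*},I_D^{*},M^{*}$ follows.
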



\section{Quantitative analysis}
\label{sec:03}

In this study, we create artificial data sets that represents the dynamics of infectious 
diseases and estimate model parameters using the least squares technique. Baseline parameter 
values $\Theta_i$ of the rabies  model  from the literature are used to numerically solve 
a non-linear deterministic model \eqref{eqn1} in the Matlab environment, generating synthetic 
datasets at each time $t_i$. Initial conditions for the number of susceptible, exposed  
infected, and recovered humans, free range  and  Domestic dogs are also provided. 


\subsection{Parameter estimation and model fitting}

In order to generate the  rabies    model-predicted number of infected individuals  
$RD(t_i; \Theta_i)$ dataset  at time $t_i$ for a given parameter vector $\Theta_i$, 
we added random Gaussian noise  $\eta_i\left(t_i\;\;\Theta_i\right)$ measurements 
to the data simulate real-world dynamics, where measurement errors are common.  
Thus, the  observed  dependent  data  were given  as
\begin{equation}
\begin{array}{llll}
I(t_i; \Theta_i)=RD\left(t_i\;\;\Theta_i
\right)+\eta_i\left(t_i\;\;\Theta_i\right) \;\; \text{for each time}\;\; t_i\in[1,\;\;n],
\end{array}
\end{equation}
where $n$ is the number of data points.
The objective function quantifies the difference between the model's predictions 
$I(t_i; \Theta_i)$ and the estimated  data $ I_{\text{estimated}, i}$. We define 
the objective function $J(\Theta_i)$  as the sum of squared differences 
between model predictions and estimated dataset as
\begin{equation}
\begin{array}{llll}
J(\Theta_i) = \displaystyle\sum\limits_{i=1}^{33} a_i\ 
\left(I(t_i; \Theta_i) - I_{\text{estimated}, i}\right)^2,  
\end{array}
\label{eqn2}
\end{equation}
where $I_{\text{estimated}, i}$ is the estimated number 
of infected  rabies individuals at time $t_i$. The term 
$$
\left(I(t_i; \Theta_i) - I_{\text{estimated}, i}\right)^2
$$ 
represents the squared difference between the model-predicted value $I(t_i; \Theta_i)$ 
and the estimated  value $I_{\text{estimated}, i}$ at time $t_i$; 
$a_i$ typically represents the weighting or importance assigned 
to each term in the summation. The goal is to find the parameter vector 
$\hat{\Theta}_i$ that minimizes the objective function $J(\Theta_i)$ given by
\begin{equation}
\begin{array}{llll}
\hat{\Theta}_i = \arg\min J(\Theta_i).  
\end{array}
\label{eqn4}
\end{equation}
The expression indicates that $\hat{\Theta}_i$ is the parameter vector that optimally 
fits the model to the estimated data by minimizing the objective function $J(\Theta_i)$. 
The symbol $\arg\min$ is used to find the argument (in this case, $\Theta_i$) that minimizes 
the function $J(\Theta_i)$.  To find the parameter value $\Theta_i$, we apply  partial  
derivatives on Eq.~\eqref{eqn1} with respect  to each of the 33 parameters  
represented  by  $\Theta_i$  such that
\begin{equation}
\begin{array}{llll}
\dfrac{\partial J(\Theta_i)}{\partial \Theta_i }
= \displaystyle\sum\limits_{i=1}^{33} 2\ \left(I(t_i; \Theta_i) - I_{\text{estimated}, i}\right)  
\dfrac{\partial (I(t_i; \Theta_i))}{\partial \Theta_i }=0\;\;\text{for} \;\;  i=1,2,\ldots,33,
\end{array}
\label{eqn6}
\end{equation}
where (see Table~\ref{T2})
\begin{equation*}
\small
\begin{aligned}
\Theta_i = & \rho_{1},\;\;\rho_{2},\;\;\rho_{3},\;\;\theta_{1},\;\;\tau_{1},\;\;
\tau_{2},\;\;\tau_{3},\;\;\beta_{1},\;\;\beta_{2},\;\;\beta_{3}, \; 
\mu_{1},\;\;\sigma_{1},\;\;\theta_{2},\;\;\gamma,\;\;\kappa_{1},\;\;\kappa_{2},\;\;
\kappa_{3},\;\;\sigma_{2},\;\;\mu_{2},\;\;\theta_{3}, \\
& \psi_{1},\;\;\psi_{2},\;\;\psi_{3},\;\;\sigma_{3},\;\;\gamma_{1},\;\;\gamma_{2},\;\;
\gamma_{3},\;\;\mu_{3},\;\;\nu_{1},\;\;\nu_{2}, \; \nu_{3},\;\;\mu_{4},\;\;C.
\end{aligned}
\end{equation*}
\begin{center}
\begin{longtable}{|l|l|l|l|l|}
\caption{\centering Estimated model parameters (Year$^{-1}$), initial guess 
for parameters (Year$^{-1}$) and their respective source.}\\ \hline 
\textbf{Parameters} & \textbf{Baseline value} & \textbf{Source} 
& \textbf{Estimated value}& $\textup{Mean}\left(\mu \right) \textup{and std}\left(\sigma\right) $\\
\hline\hline
\endfirsthead
\multicolumn{4}{c}
{\tablename\ \thetable\ -- \textit{Continued from previous page}} \\
\hline
\textbf{Parameters} & \textbf{Baseline value} & \textbf{Source} 
& \textbf{Estimated value}& $\textup{Mean}\left(\mu \right) \textup{and std}\left(\sigma\right) $\\ \hline
\endhead
\hline \multicolumn{4}{r}{\textit{Continued on next page}} \\
\endfoot
\hline
\endlastfoot

$\theta_{1}$ & 2000& (Estimated)&1993.382113 & $\mathcal{ N}\left(1996.691056\;\; 4.4679553 \right)$. \\
$\tau_{1}$ & 0.0004& \cite{tian2018transmission}&0.000405 &$\mathcal{ N}\left(0.000402\;\; 4\times10^{-6} \right)$.\\
$\tau_{2}$ & 0.0004& \cite{tian2018transmission}&0.000604&$\mathcal{ N}\left(0.000502\;\; 1.44\times10^{-4} \right)$.\\
$\tau_{3}$ & $\left[0.0003\;\;  0.0100\right]$ & (Estimated)&0.000303 &$\mathcal{ N}\left(0.000302\;\; 2\times10^{-6} \right)$ .\\
$\beta_{1}$ & $\frac{1}{6}$ & \cite{zhang2011analysis,tian2018transmission}&0.165581& $\mathcal{ N}\left(0.166124\;\; 7.68\times10^{-4} \right)$.\\
$\beta_{2}$ & $\left[0.54 \;\; 1\right]$ & \cite{zhang2011analysis,abdulmajid2021analysis}&0.540487&$\mathcal{ N}\left(0.5402435\;\; 3.7815\times10^{-4} \right)$.\\
$\beta_{3}$ & 1& (Estimated)&0.999301&$\mathcal{ N}\left(0.9996505\;\; 1.6521\times10^{-4} \right)$.\\
$\mu_1$ & 0.0142& \cite{world2010working,world2013expert}&0.014417&$\mathcal{ N}\left(0.014309\;\; 1.53\times10^{-4} \right)$.  \\
$\sigma_1$ & 1& \cite{zhang2011analysis,abdulmajid2021analysis} &1.006332&$\mathcal{ N}\left(1.03166\;\; 4.47\times10^{-3} \right)$. \\
$\theta_{2}$ & 1000& (Estimated)&1004.12044&$\mathcal{N}\left(1002.060222\;\; 2.913594\right)$. \\
$\kappa_{1}$ &  0.00006 & (Estimated)&0.000020&$\mathcal{ N}\left(0.000040\;\; 2.8\times10^{-5} \right)$.\\
$\kappa_{2}$ & 0.00005 & (Estimated)&0.000081&$\mathcal{ N}\left(0.000066\;\; 2.2\times10^{-5} \right)$.\\
$\kappa_{3}$ & $\left[0.00001 \;\; 0.00003\right]$&(Estimated)&0.000040&$\mathcal{ N}\left(0.000025\;\; 2.1\times10^{-5} \right)$.\\
$\gamma$ & $\frac{1}{6}$ & \cite{zhang2011analysis,tian2018transmission,abdulmajid2021analysis} & 0.166374&$\mathcal{ N}\left(0.166520\;\; 2.07\times10^{-4} \right)$.\\
$\sigma_2$ & 0.09 & \cite{zhang2011analysis,addo2012seir}&0.089556&$\mathcal{ N}\left(0.089778\;\; 3.14\times10^{-4} \right)$.\\
$\mu_{2}$ & 0.067 & (Estimated)&0.066268 &$\mathcal{ N}\left(0.066634\;\; 1.58\times10^{-4} \right)$.\\
$\theta_3$ & 1200& (Estimated)&1203.844461&$\mathcal{ N}\left(1201.922230\;\; 2.718444 \right)$.\\
$\psi_{1}$ &0.0004& \cite{hampson2019potential,addo2012seir}&0.000077&$\mathcal{ N}\left(0.000238\;\; 2.28\times10^{-4} \right)$.\\
$\psi_{2}$ & 0.0004 & \cite{hailemichael2022effect}&0.000066&$\mathcal{ N}\left(0.000233\;\; 2.36\times10^{-4} \right)$. \\
$\psi_{3}$ & 0.0003 & (Estimated)&0.000030&$\mathcal{ N}\left(0.0003\;\; 1.91\times10^{-4} \right)$.\\
$\mu_3$ &0.067& (Estimated)&0.080129&$\mathcal{ N}\left(0.073565\;\; 8.056\times10^{-3} \right)$. \\
$\sigma_3$ &0.08& \cite{zhang2011analysis}&0.091393 & $\mathcal{ N}\left(0.085697\;\; 8.056\times10^{-3} \right)$. \\
$\gamma_1$ & $\frac{1}{6}$ & \cite{zhang2011analysis,tian2018transmission}&0.172489&$\mathcal{ N}\left(0.169578\;\; 4.117\times10^{-3} \right)$. \\
$\gamma_2$ & 0.09 & \cite{zhang2011analysis}&0.090308&$\mathcal{ N}\left(0.090154\;\; 2.18\times10^{-4} \right)$.\\
$\gamma_3$ & 0.05&(Estimated)&0.050128&$\mathcal{ N}\left(0.050128\;\; 9.1\times10^{-5} \right)$.\\
$\nu_1$ &0.001& (Estimated)&0.001958&$\mathcal{ N}\left(0.001479\;\; 6.77\times10^{-4} \right)$.\\
$\nu_2$ &0.006& (Estimated)&0.008971&$\mathcal{ N}\left(0.007485\;\; 2.101\times10^{-3} \right)$.\\
$\nu_3$ &0.001& (Estimated)&0.005735&$\mathcal{ N}\left(0.003367\;\; 3.3348\times10^{-3} \right)$.\\
$\mu_4$  &0.08& (Estimated)&0.080625&$\mathcal{ N}\left(0.080313\;\; 4.42\times10^{-4} \right)$. \\
$\rho_{1}$ & 10& \cite{ruan2017spatiotemporal}&9.920733&$\mathcal{ N}\left(9.960366\;\; 5.605\times10^{-2} \right)$.\\
$\rho_{2}$ & 8& (Estimated)&8.116421&$\mathcal{ N}\left(8.058211\;\; 8.2322\times10^{-2} \right)$.\\
$\rho_{3}$ & 15 & (Estimated)&14.917005 &$\mathcal{ N}\left(14.958502\;\; 5.8686\times10^{-2} \right)$.\\
$C$  &0.003  (PFU)/mL & (Estimated)&0.003011&$\mathcal{ N}\left(0.003005\;\; 8.0000\times10^{-6} \right)$.
\label{T2}
\end{longtable}
\end{center}


\subsection{Parameter sensitivity analysis}

For the modeling  uncertainty and parameter estimation of rabies transmission global sensitivity analysis (GSA), 
we employed LHS-PRCC to analyze uncertainty, specifically in the context of a model with 33 parameters 
(referred to as model \eqref{eqn1}). The aim is to identify the key factors influencing the transmission 
of a new infection under a specific intervention, following the approach outlined in reference \cite{world2013expert}. 
This approach involved using LHS to sample the parameters that contribute to the variable $\mathcal{R}_{0}$ 
and calculating Partial Rank Correlation Coefficients (PRCC) for these parameters, as defined in Eq. \eqref{LHS}. 
We conducted a total of 1000 simulations for each LHS run. Our parameter sampling was based on a uniform distribution, 
treating the model parameters as input variables, and $\mathcal{R}_{0}$ as the output variable. Larger absolute PRCC 
values indicate a more substantial influence of a parameter on $\mathcal{R}_{0}$. If the $p$-value exceeds 0.05, 
it is assumed that a parameter is not statistically significant for $\mathcal{R}_{0}$. Positive values denote 
positive correlations, negative values signify negative correlations, and values within the range of -0.2 to +0.2 
represent weak or negligible correlations. PRCC values greater than +0.6 or less than -0.6 indicate strong positive 
or strong negative correlations, while values within the range of $0.2 < \text{PRCC} < 0.6 $ 
or $-0.6 < \text{PRCC} <-0.2 $ represent moderate positive or moderate negative correlations, respectively.

The PRCC between parameter $y_i$ and output $x_j$, controlling 
for the effects of other variables $Z$, is then  computed as
\begin{equation}
\begin{aligned}
\text{PRCC}(y_i, x_j | Z)= \dfrac{\text{cov}(\text{X}, \text{Y})}{\sigma_X \sigma_Y} 
= \dfrac{\text{Corr}(y_i, x_j | Z)}{\sqrt{\text{Corr}(y_i, y_i | Z) \times \text{Corr}(x_j, x_j | Z)}},
\end{aligned}
\label{LHS}
\end{equation}
where $X$  is  the matrix of model outputs (e.g., disease incidence or prevalence), 
$Y$ is  the matrix of model parameters (e.g., transmission rates, recovery rates), 
$Z$ is  the matrix of other relevant variables, \(\text{cov}(\text{X}, \text{Y})\) 
is the covariance between the model output (\(X\)) and the input parameter (\(Y\)), 
and \(\sigma_X\) and \(\sigma_Y\) are the standard deviations of \(X\) and \(Y\), respectively. 

Once PRCC and LHS values are computed for each parameter, they were plotted against 
the model output to visualize their impact. This visualization helps identify which 
parameters have the most significant influence on the model output and how changes 
in those parameters affect the model behavior. Additionally, these techniques aid 
in obtaining solutions by guiding parameter estimation processes 
and improving model calibration and validation.
	
\begin{figure}[!h]
\begin{minipage}[b]{0.45\textwidth}
\includegraphics[scale=0.2]{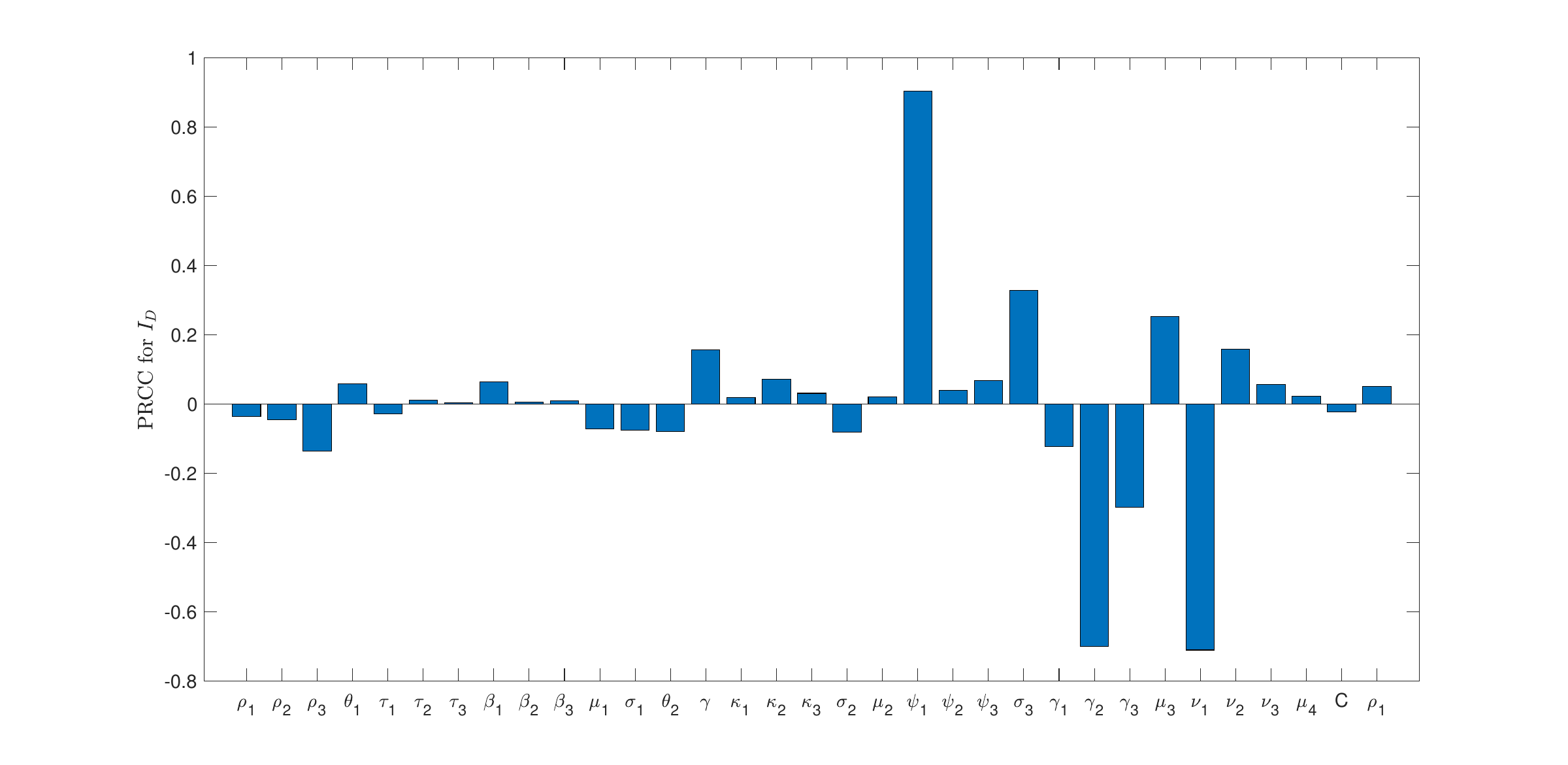}
\centering{(a)}
\end{minipage}
\begin{minipage}[b]{0.45\textwidth}
\includegraphics[scale=0.2]{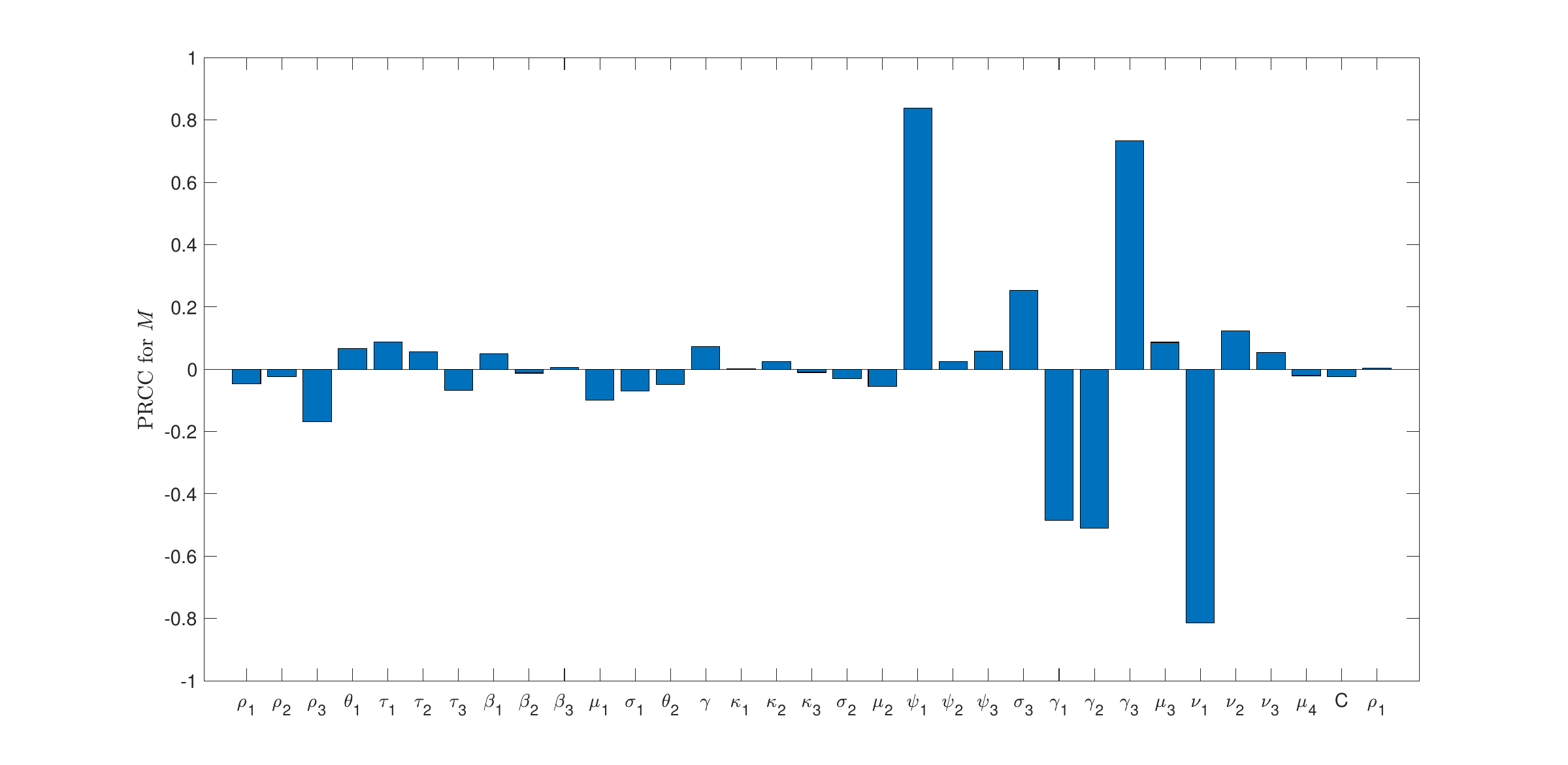}
\centering{(b)}
\end{minipage}

\begin{minipage}[b]{0.45\textwidth}
\includegraphics[scale=0.2]{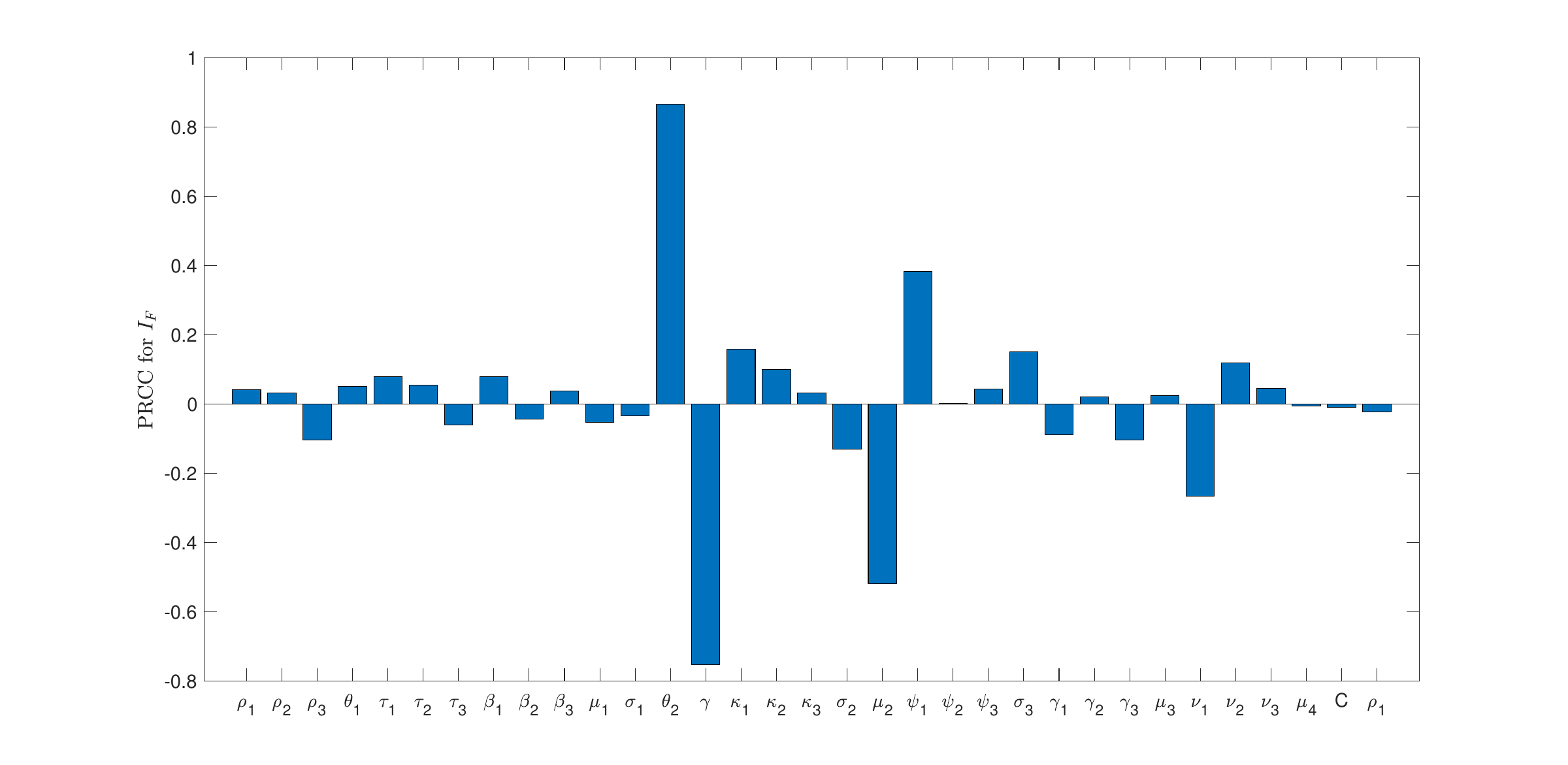}
\centering{(c)}
\end{minipage}
\begin{minipage}[b]{0.45\textwidth}
\includegraphics[scale=0.2]{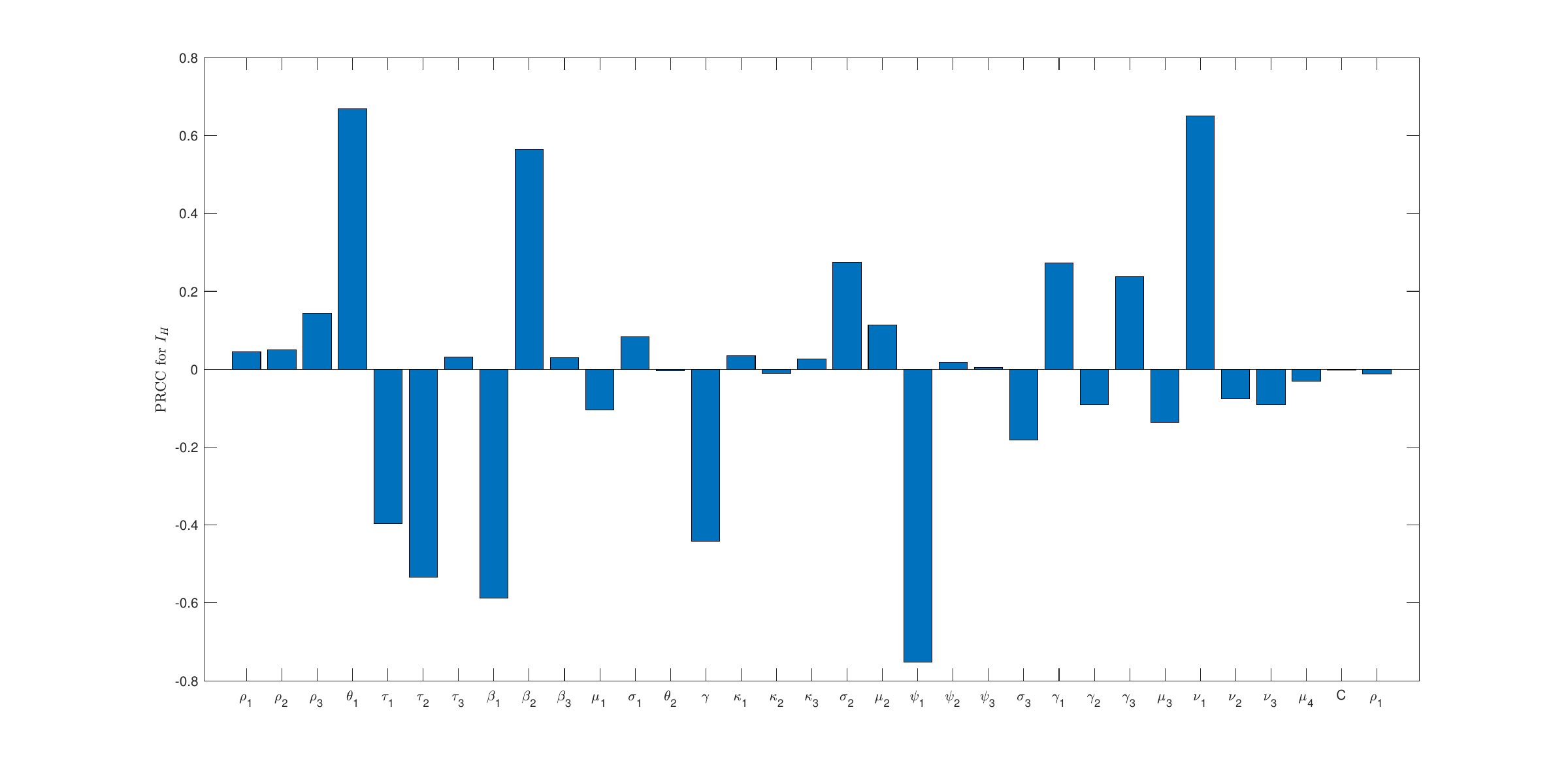}
\centering{(d)}
\end{minipage}
\caption{\centering The sensitivity analysis of  Rabies model dynamics involved 
1000 simulations employing Latin Hypercube Sampling (LHS). The analysis evaluated 
Partial Rank Correlation Coefficients (PRCCs) concerning (a) domestic dogs, (b) Environment, 
(c) Free-range dogs, and (d) Human population, respectively.}
\label{Fig7}
\end{figure}
	
\begin{figure}[H]
\begin{minipage}[b]{0.45\textwidth}
\includegraphics[scale=0.35]{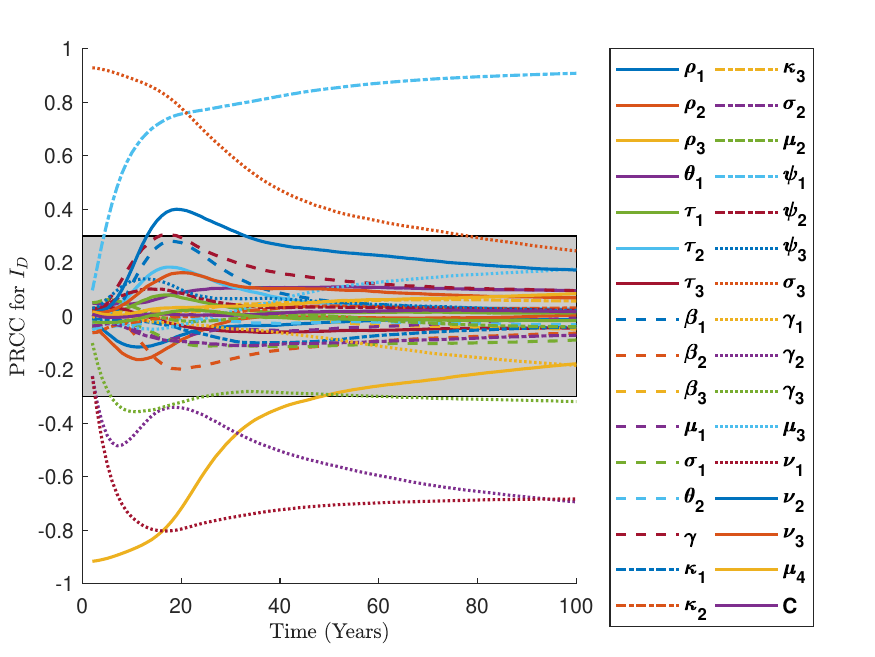}\\
\centering{(a)}
\end{minipage}
\begin{minipage}[b]{0.45\textwidth}
\includegraphics[scale=0.35]{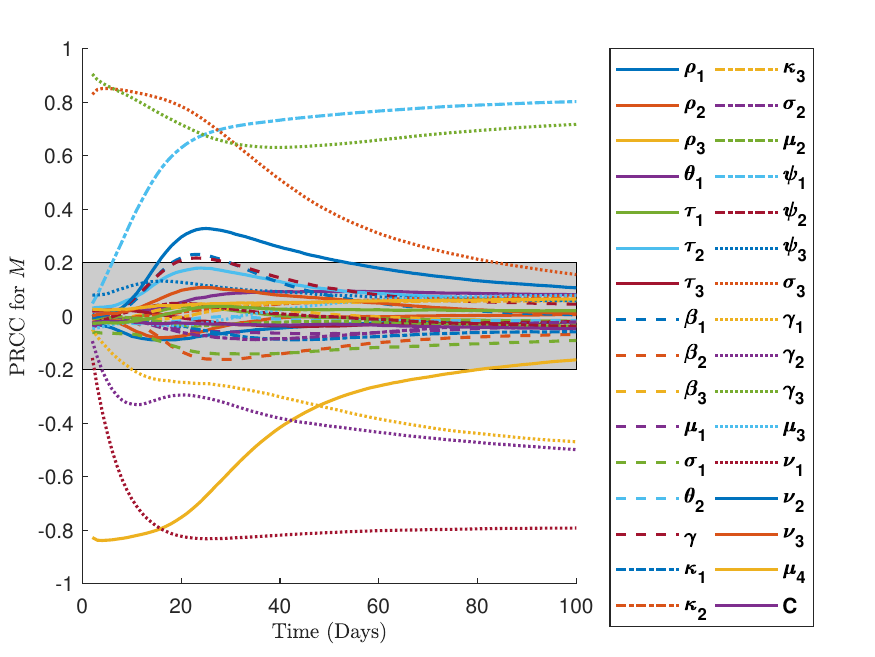}\\
\centering{(b)}
\end{minipage}

\begin{minipage}[b]{0.45\textwidth}
\includegraphics[scale=0.35]{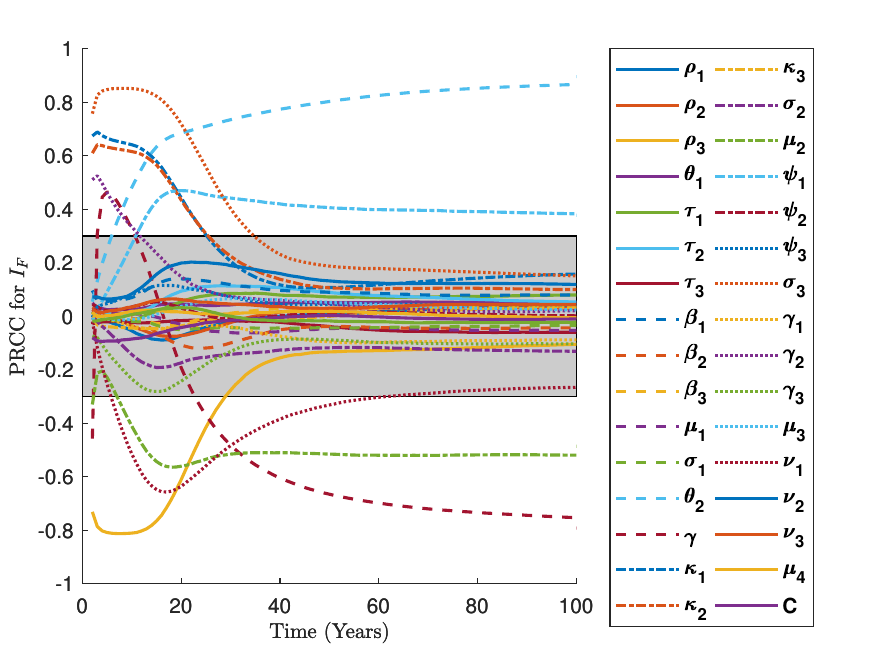}\\
\centering{(c)}
\end{minipage}
\begin{minipage}[b]{0.45\textwidth}
\includegraphics[scale=0.35]{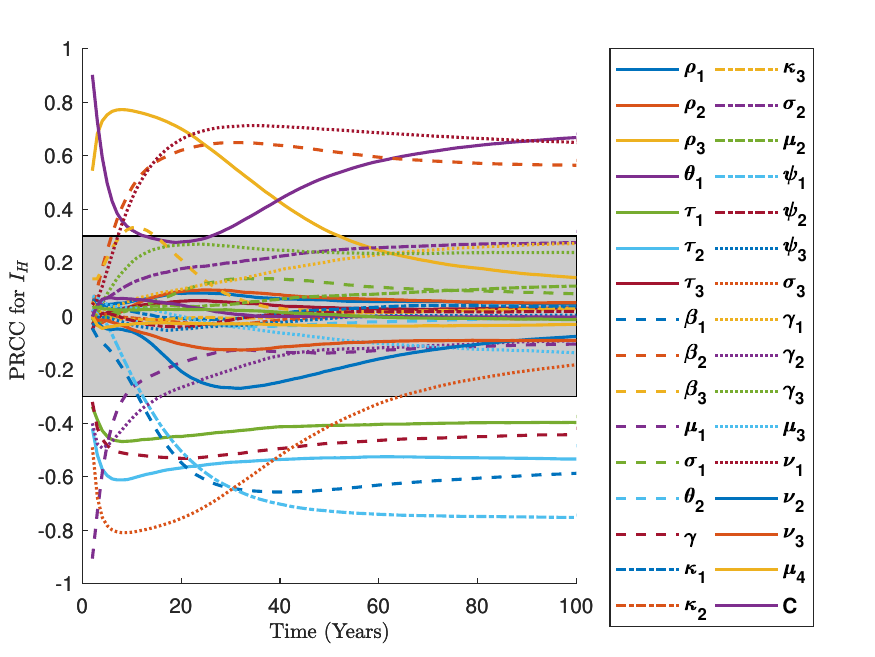}\\
\centering{(d)}
\end{minipage}
\caption{\centering A visual representation showcasing the evolution of parameter 
sensitivity throughout the progression of the system dynamics. PRRC values over time 
span of 100 years with respect to (a) domestic dogs, (b) Environment, 
(c) Free-range dogs, and (d) Human population.}
\label{Fig8}
\end{figure}
The PRCC results provide insights into the impact of each parameter on the model 
and its associated uncertainty. Parameters with PRCC values close to zero, or equal to zero, 
are considered statistically insignificant. Fig.~\ref{Fig7}  indicates that $\psi_{1}$, 
$\psi_{2}$, $\psi_{3}$, $\tau_{1}$, $\tau_{2}$, $\tau_{3}$, $\kappa_{1}$, $\kappa_{2}$, 
and $\kappa_{3}$ have  positive correlation with the  transmission  dynamics  of rabies  
based  on the model  solution  computed  in Eq.~\eqref{LHS}, while
$\rho_{1}$, $\rho_{2}$, $\rho_{3}$, $\beta_{1}$, $\beta_{2}$, $\beta_{3}$, 
$\mu_{1}$, $\sigma_{1}$, $\sigma_{2}$, $\mu_{2}$, $\theta_{3}$, $\psi_{1}$, $\psi_{2}$, 
$\psi_{3}$, $\sigma_{3}$, $\gamma_{1}$, $\gamma_{2}$, $\gamma_{3}$, $\mu_{3}$, and $C$ 
have negative  correlation with the  transmission  dynamics  of rabies. 
Fig.~\ref{Fig8} illustrates the sensitivity of PRCC values across the entire time interval 
of the model simulation. It assesses significance and demonstrates how the sensitivity 
of each parameter influences the dynamics of the model system. This suggests that, 
to control rabies in humans and  dogs, more efforts should be directed to reduce 
the rate of infections by intervening the transmission and control of the contact rate.


\subsection{Numerical simulation}

Due to the resource-intensive nature of individually solving for 33 parameters, 
we  employed the MATLAB built-in function \textsf{fminsearch}, which leverages 
the Nelder-Mead simplex algorithm \cite{lagarias1998convergence} to identify local 
minima for the residual sum of squares as presented in Eq.~\eqref{eqn6}. Our choice 
of initial parameter values was guided by the criteria established during the qualitative 
analysis along with the initial conditions $S_H\left(0\right) = 142000 $, 
$E_H\left(0\right) = 40$, $I_H\left(0\right) = 0$, $R_H\left(0\right) = 0$,  
$S_D\left(0\right) = 15000$, $E_D\left(0\right) = 25$,$I_D\left(0\right) = 0$, 
$R_D\left(0\right) = 0$, $S_F\left(0\right) = 12500$, $E_F\left(0\right) = 20$, 
$I_F\left(0\right) = 0$, $M\left(0\right) = 90$ . The estimated parameter values 
in Table~\ref{T2} were then applied to model the data, and the outcomes 
are presented in Figures~\ref{fig:HP}--\ref{fig:DD}.

\begin{figure}[H]
\centering
\includegraphics[scale=0.43]{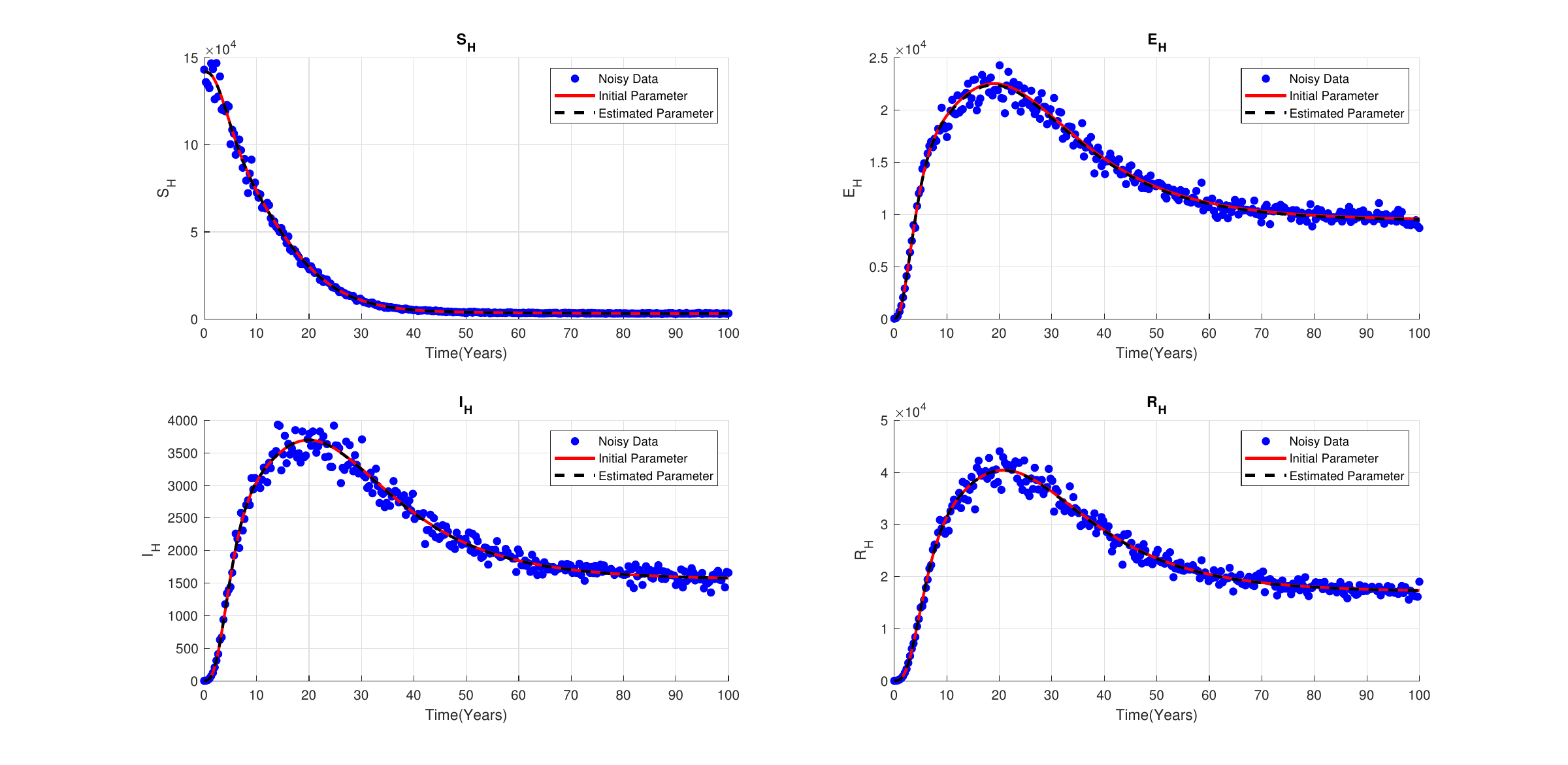}
\caption{\centering Model  fitting and  scatter plot with  corresponding  
parameter estimation   with  standard deviation $\sigma$=0.05 and  
Confidence interval (C.I.)=95\% for human population.}
\label{fig:HP}
\end{figure}
\begin{figure}[H]
\centering
\includegraphics[scale=0.55]{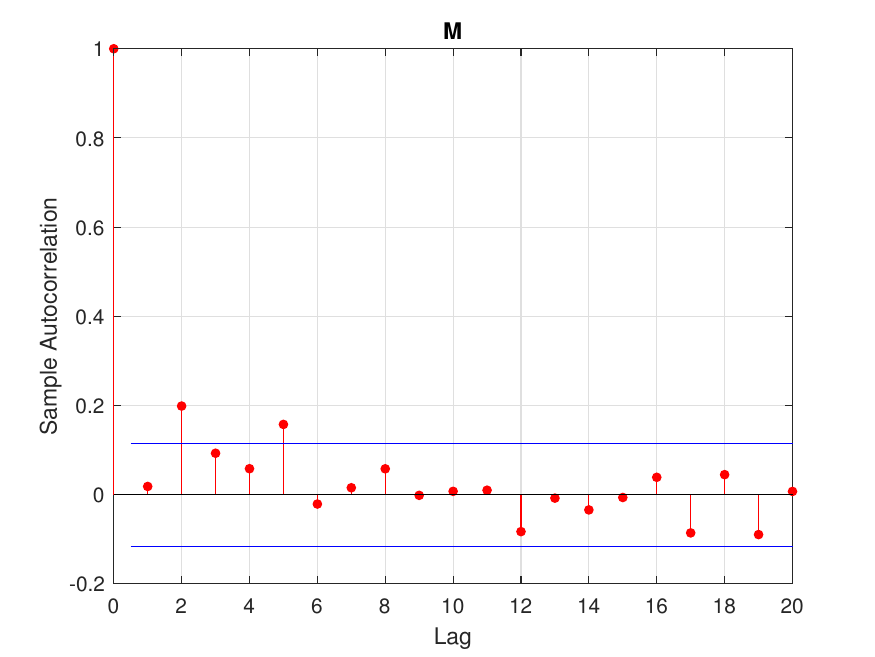}
\caption{\centering The sample autocorrelation of the residuals in relation 
to environment indicating the lack of significance at the 5\% level.}
\label{fig:AM}
\end{figure}
\begin{figure}[H]
\centering
\includegraphics[scale=0.4]{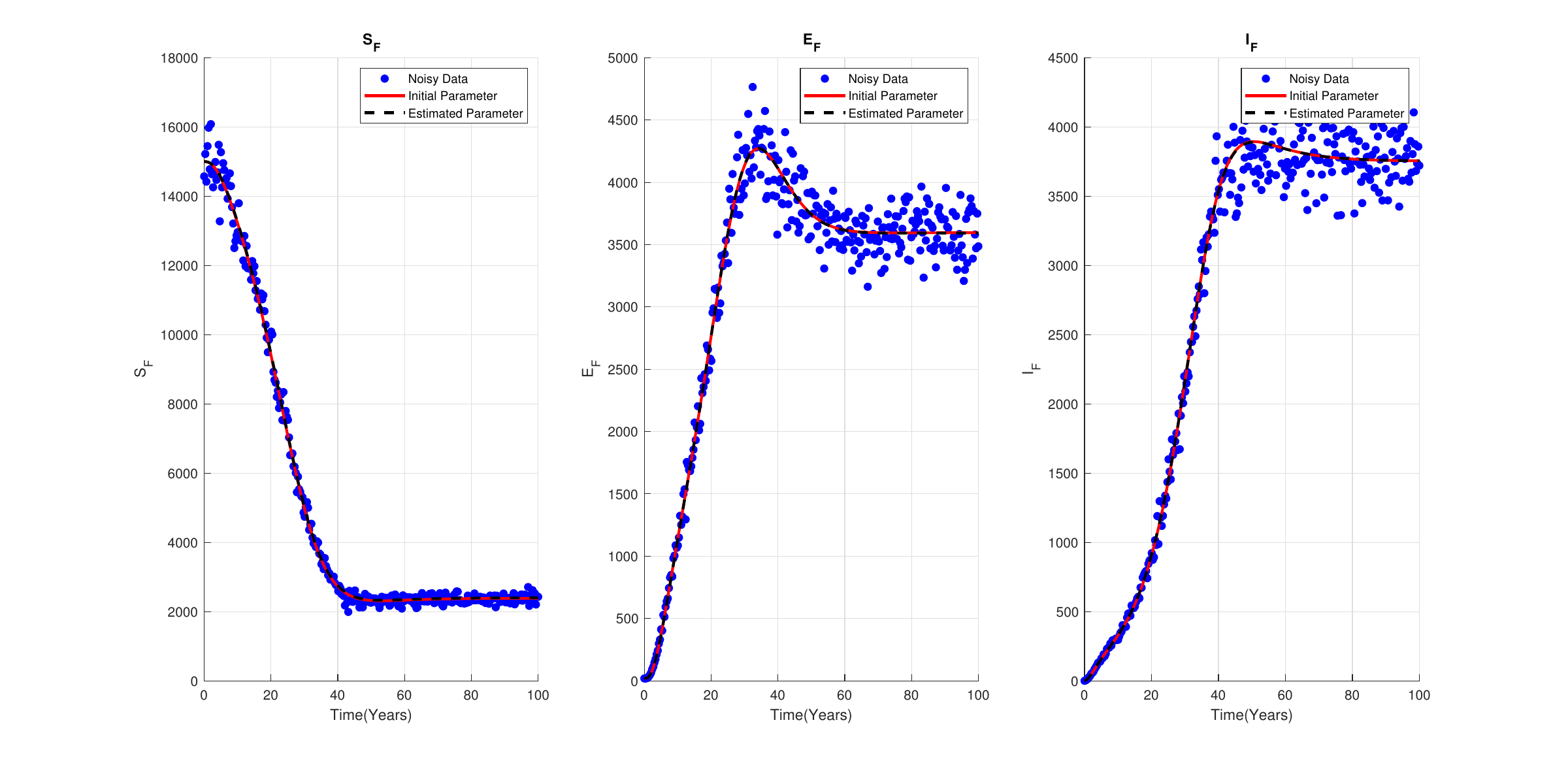}
\caption{\centering Model  fitting and  scatter plot with  corresponding  
parameter estimation   with  standard deviation $\sigma$=0.05 and  
Confidence interval (C.I.)=95\% for free range dogs.}
\label{fig:FRD}
\end{figure}
\begin{figure}[H]
\centering
\includegraphics[scale=0.40]{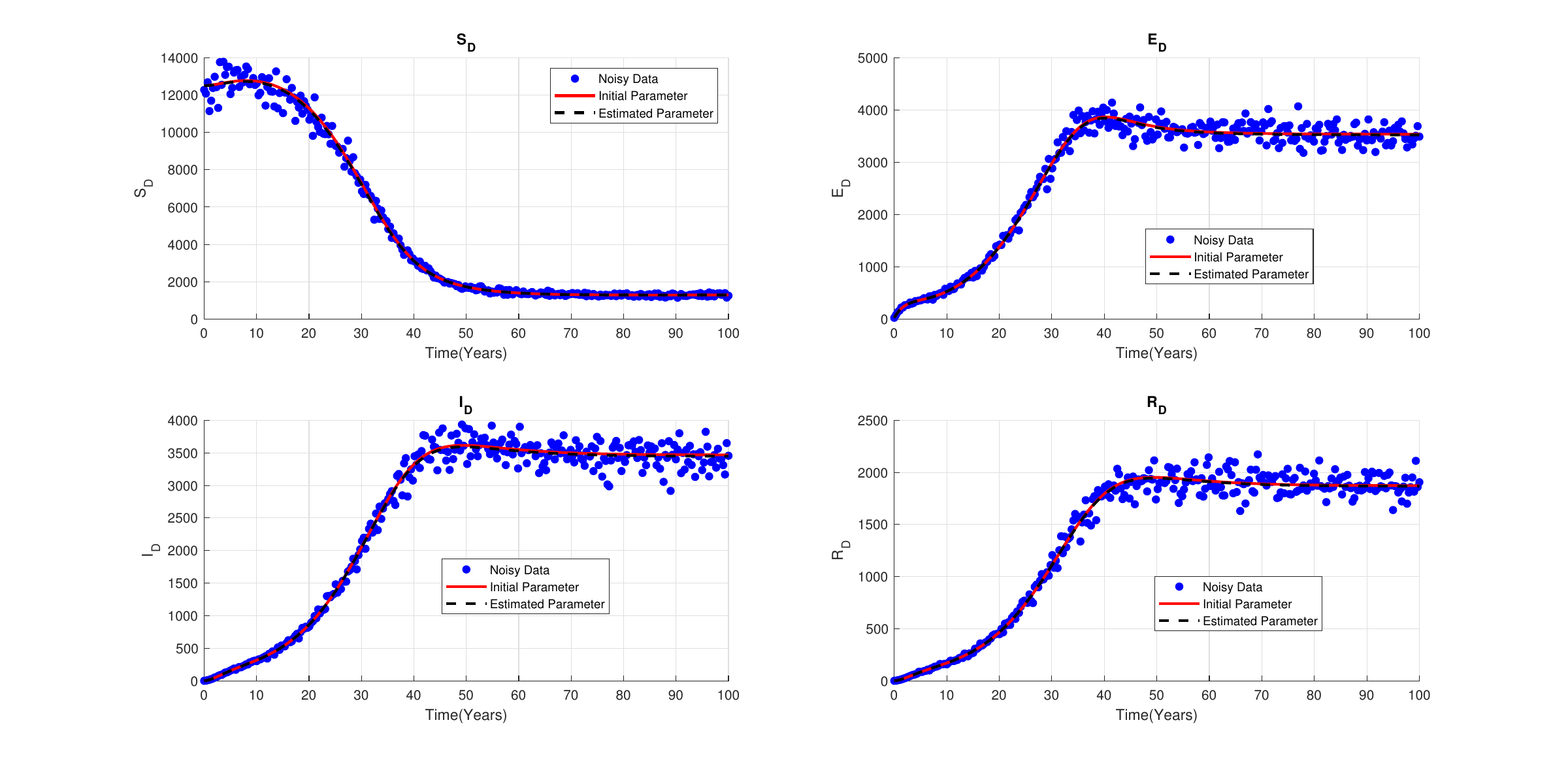}
\caption{\centering Model  fitting and  scatter plot with  corresponding  
parameter estimation   with  standard deviation $\sigma$=0.05 and  
Confidence interval (C.I.)=95\% for domestic dogs.}
\label{fig:DD}
\end{figure}

\begin{figure}[H]
\centering
\includegraphics[scale=0.65]{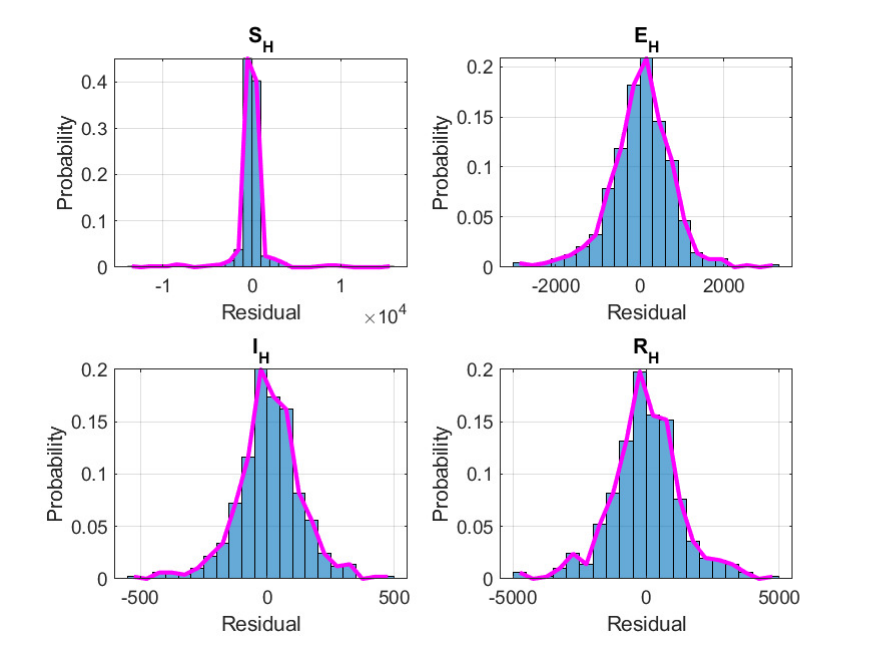}
\caption{\centering Normal distribution  of the rabies  model variable   
with  standard deviation $\sigma$=0.05 and  Confidence interval 
(C.I.)=95\% for  human population.}
\label{fig:H}
\end{figure}

\begin{figure}[H]
\centering
\includegraphics[scale=0.75]{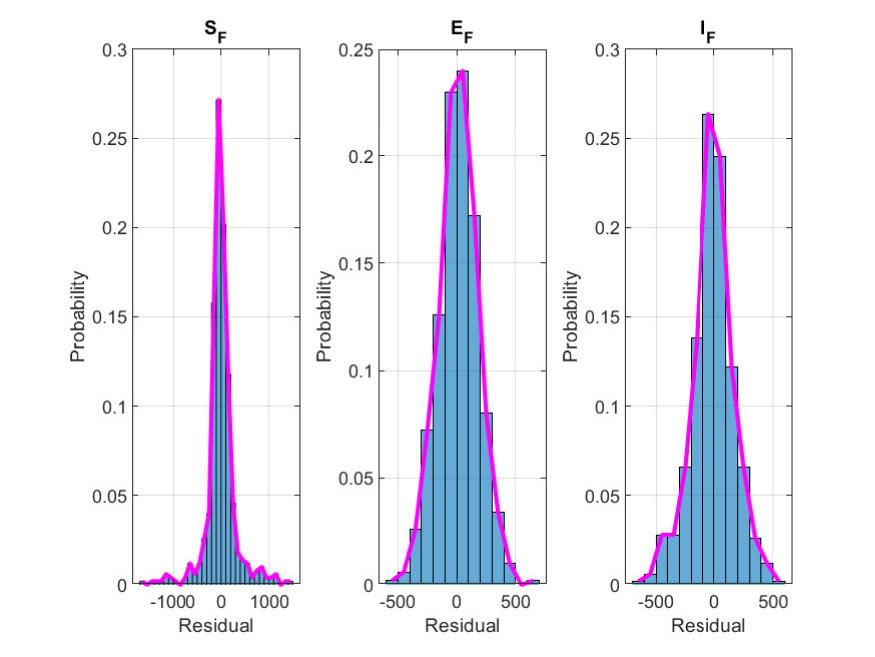}
\caption{\centering Normal distribution  of the rabies  model variable   
with  standard deviation $\sigma$=0.05 and  Confidence interval 
(C.I.)=95\% for  free-range dogs population.}
\label{fig:F}
\end{figure}
\begin{figure}[H]
\centering
\includegraphics[scale=0.75]{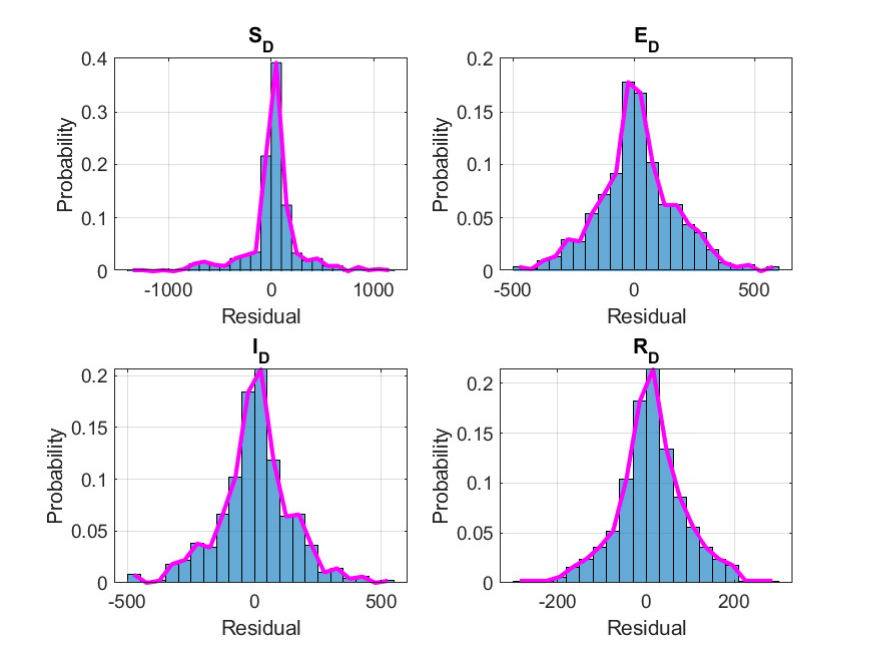}
\caption{\centering Normal distribution  of the rabies  model variable   
with  standard deviation $\sigma$=0.05 and  Confidence interval 
(C.I.)=95\% for  domestic dogs population.}
\label{fig:D}
\end{figure}

\begin{figure}[H]
\centering
\includegraphics[scale=0.75]{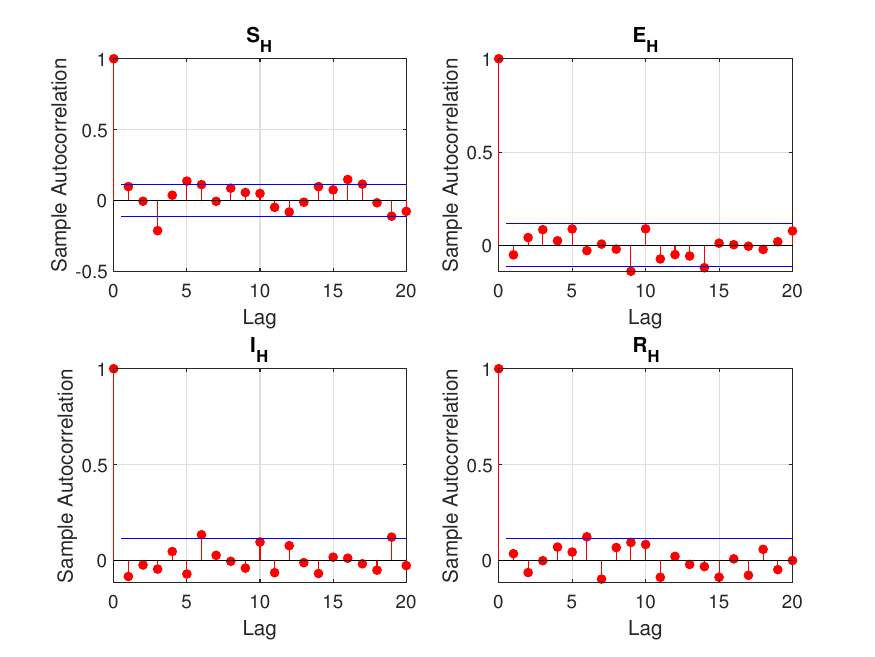}
\caption{\centering The sample autocorrelation of the residuals in relation 
to human population indicating the lack of significance at the 5\% level.}
\label{fig:AH}
\end{figure}

\begin{figure}[H]
\centering
\includegraphics[scale=0.75]{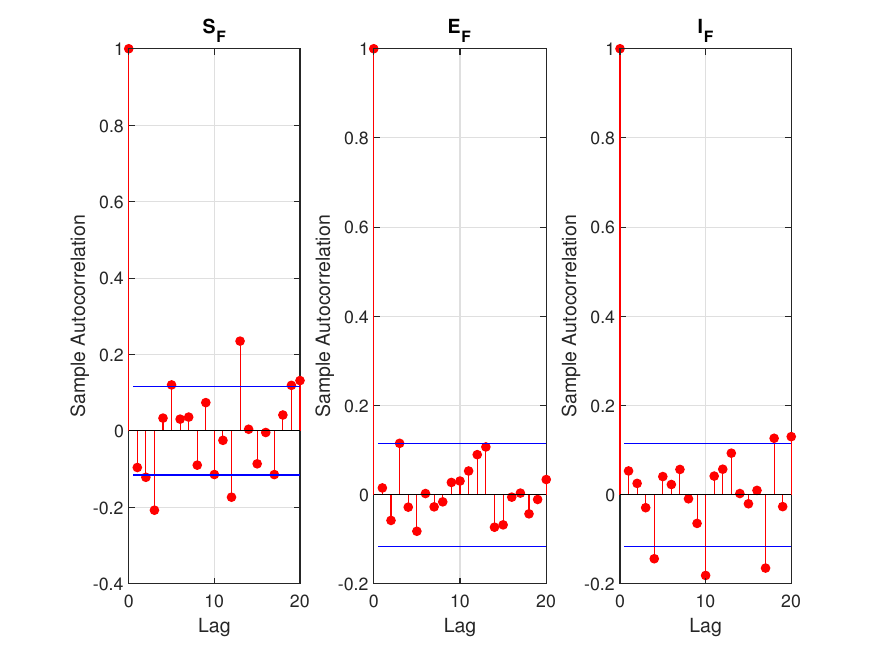}
\caption{\centering The sample autocorrelation of the residuals in relation 
to domestic dogs indicating the lack of significance at the 5\% level.}
\label{fig:AF}
\end{figure}
\begin{figure}[H]
\centering
\includegraphics[scale=0.75]{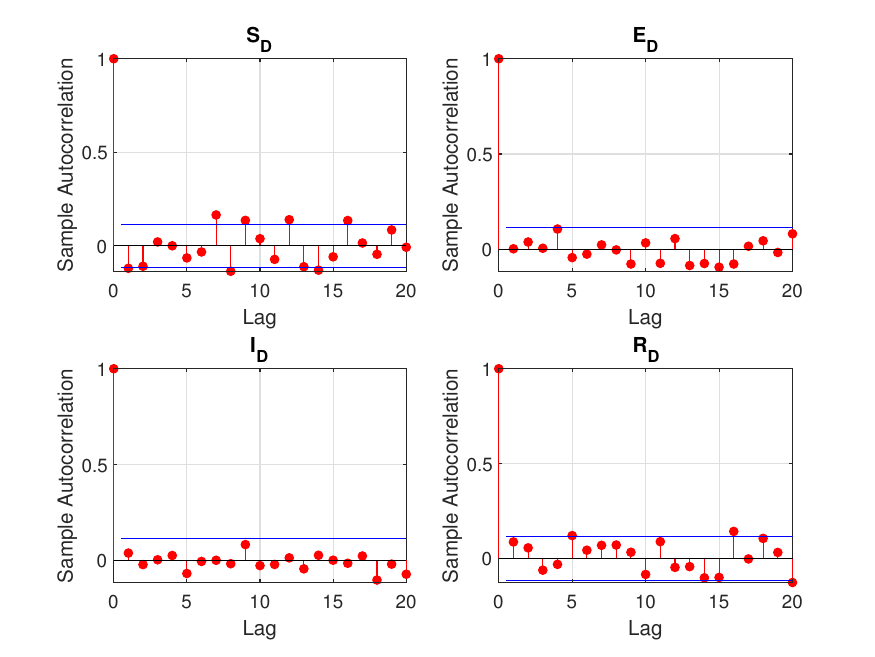}
\caption{\centering The sample autocorrelation of the residuals in relation 
to free range indicating the lack of significance at the 5\% level.}
\label{fig:AD}
\end{figure}


\section{Discussion}
\label{sec:04}

Figures~\ref{fig:H}--\ref{fig:D} illustrate the histogram of residuals, 
which provide insight into the consistency between the proposed model fitting 
and the estimated parameter values, indicating that they stem from the same probability 
distribution function. By examining the distribution of residuals, which represent 
the differences between observed data and model predictions, we can assess how effectively 
the model captures the underlying dynamics of rabies transmission. Meanwhile, 
Figs.~\ref{fig:AH}--\ref{fig:AD} show an autocorrelation between 0.2 and -0.2, 
implying a moderate to weak correlation between consecutive residuals. 
This suggests that the errors in the model predictions are relatively independent 
of each other over time, which can be interpreted as a desirable property indicating 
that the model adequately captures the underlying dynamics without systematic 
biases or trends in the residuals.

However, the spread of infectious diseases among human and dog populations 
often results in a dynamic interplay between the number of susceptible individuals 
and the number of infected individuals. In particular, Figs.~\ref{fig:HP}--\ref{fig:AD}   
show trends for humans, free-range and domestic dog populations. The data presented 
in these figures suggest that as the number of infections increases, the number 
of susceptible individuals decreases towards a non-negative constant value. 
This phenomenon can be explained by the increased contact rate between infectious 
and susceptible compartments.


\section{Conclusion} 
\label{sec:05}

In this study, we developed and analyzed a deterministic model to estimate parameters and assess 
uncertainty in the transmission dynamics of rabies in humans and dogs. Data were generated by adding 
random Gaussian noise to a real world data, and parameter values were estimated using the non-linear 
least squares method. The basic reproduction number ${\cal R}_0$ was derived using the next-generation 
matrix approach. We determined model equilibria and found that the rabies-free equilibrium is globally 
asymptotically stable when ${\cal R}_0$ is less than one, and the rabies-endemic equilibrium is globally 
stable when ${\cal R}_0$ is greater than or equal to one. Latin Hypercube and Partial Rank Correlation 
Coefficient analyses were employed to identify which parameters positively or negatively affect model outputs. 
Sensitivity analysis revealed that contact rates between humans and dogs contribute to an increase 
in average new rabies infections in the entire population.


\section*{CRediT Authorship Contribution Statement}

Mfano Charles: Writing -- original draft, Software, Methodology, Formal analysis, Conceptualization.
Sayoki G. Mfinanga: Writing -- review \& editing, Supervision, Methodology, Data curation, Conceptualization.
G.~A.~Lyakurwa: Writing -- review \& editing, Supervision.
Delfim F.~M.~Torres: Writing -- review \& editing, Supervision, Methodology.
Verdiana G.~Masanja: Writing -- review \& editing.

  
\section*{Funding Statement}

Torres is supported by the Portuguese Foundation for Science and Technology (FCT) and CIDMA, 
project UIDB/04106/2020.


\section*{Declaration of Competing Interest}

The authors declare that they have no known competing
financial interests or personal relationships that
could have appeared to influence the work reported
in this paper.


\section*{Acknowledgments}

The authors would like to extend their gratitude to the handling editor and the referee for their valuable input, 
which significantly contributed to enhancing the quality of this manuscript. Furthermore, we wish 
to acknowledge The Nelson Mandela African Institution of Science and Technology (NM-AIST) 
and the College of Business Education (CBE) for providing a conducive environment 
during the writing of this manuscript.
 

\section*{Data Availability}

The data used in this study are included in the manuscript.



\end{document}